\documentclass[letterpaper, 10pt, conference]{ieeeconf}
\usepackage[pdftex]{graphicx}
\usepackage{amsmath,amsfonts,amssymb}
\usepackage{mathtools}
\usepackage{times}
\usepackage{float}
\usepackage{multirow}
\usepackage{paralist, tabularx}
\usepackage{hyperref}
\usepackage{algorithm}
\usepackage{algpseudocode}
\usepackage[normalem]{ulem}
\usepackage{cite}
\usepackage{color}
\usepackage{booktabs}
\usepackage{makecell}
\usepackage{epsfig}
\usepackage{siunitx}
\usepackage{colortbl}
\usepackage[dvipsnames]{xcolor}
\usepackage{tikz}
\usetikzlibrary{arrows, positioning, calc}
\usepackage{soul}

\newtheorem{theorem}{Theorem}
\newtheorem{remark}{Remark}
\newtheorem{lemma}{Lemma}
\newtheorem{proposition}{Proposition}
\newtheorem{corollary}{Corollary}
\newtheorem{definition}{Definition}
\newtheorem{assumption}{Assumption}

\newcommand{\R}{\mathbb{R}}
\newcommand{\C}{\mathbb{C}}
\newcommand{\N}{\mathbb{N}}
\newcommand{\norm}[1]{\left\|#1\right\|}

\newcommand{\diag}{\operatorname{diag}}
\newcommand{\spec}{\operatorname{spec}}
\newcommand{\Int}{\operatorname{Int}}
\newcommand{\Real}{\operatorname{Re}}
\newcommand{\Imag}{\operatorname{Im}}
\newcommand{\adj}{\operatorname{adj}}

\newcommand{\rank}{\operatorname{rank}}
\newcommand{\Ac}{\widetilde{A}}
\newcommand{\bc}{\widetilde{b}}
\newcommand{\Ld}{\Lambda^{\dagger}}
\DeclareMathOperator*{\argmin}{arg\,min}

\providecommand{\IEEEoverridecommandlockouts}{}
\providecommand{\overrideIEEEmargins}{}
\providecommand{\QED}{\hfill\blacksquare}

\makeatletter
\@ifundefined{iflatexml}{%
  \newif\iflatexml
  \latexmlfalse
}{}
\makeatother

\IEEEoverridecommandlockouts
\title{\LARGE \bf
Safe Stabilising Full-Order Affine Control Barrier Functions for Linear Systems (Extended)
}

\author{
Faisal Lawan, Joaquin Carrasco, Lanlan Su
\thanks{This work was supported by the University of Manchester's School of Engineering Strategic PGR Scholarship.}
\thanks{F. Lawan, J. Carrasco, and L. Su are with the Department of Electrical and Electronic Engineering, The University of Manchester, M13 9PL, United Kingdom (Corresponding authors: F. Lawan, \texttt{faisal.lawan@manchester.ac.uk}; J. Carrasco, \texttt{joaquin.carrasco@manchester.ac.uk}; L. Su, \texttt{lanlan.su@manchester.ac.uk}.)}
}

\begin{document}
\maketitle
\thispagestyle{empty}
\pagestyle{empty}

\begin{abstract}
Control barrier function safety filters enforce constraints by modifying a nominal input, but the resulting switching can destabilise the closed loop even when the nominal and filtered modes are individually stable. This paper presents a design framework for safe and globally exponentially stabilising controllers for linear systems with a single full-relative-degree affine constraint. We show that the filtered-mode spectrum is fixed by the barrier tuning and is independent of the plant,
nominal controller, and quadratic-program weighting. This structure yields an explicit nominal controller for which the safety filter remains inactive everywhere. For a prescribed nominal controller, we prove that the nominal and filtered modes admit a strong common quadratic Lyapunov function if and only if the ratio of the nominal characteristic polynomial to the barrier polynomial is strongly strictly positive real. This equivalence characterises the existence of a common quadratic Lyapunov and provides a scalar frequency-domain test, along with an explicit interval of admissible gains. Building on these results, the extended analysis derives an explicit common storage function and reduces an existing LMI synthesis condition to a feasibility test in a single matrix variable. A flexible two-mass example explains a known instability mechanism and demonstrates how the proposed design restores safety and global exponential stability.
\end{abstract}

\section{Introduction}

Safety-critical control requires a system to satisfy constraints while
pursuing its nominal control objective. Control barrier functions
(CBFs) provide a systematic way to enforce such constraints through
forward invariance \cite{ames2017cbf,ames2019cbf}. They are commonly
implemented as safety filters: at each state, a quadratic program
modifies a nominal input only when necessary to preserve safety.
High-order CBFs extend this construction to constraints for which the
input does not appear in the first derivative of the constraint
function \cite{xiao2022hocbf}. These methods can be placed around an
existing controller, but the resulting closed-loop behaviour need not
retain the stability properties of that controller.

This limitation has motivated a growing body of work on the dynamics
induced by barrier-function safety filters. Such filters can introduce
undesired equilibria \cite{reis2021undesirable,tan2024undesired},
unbounded trajectories \cite{mestres2026dynamical}, and limit cycles
\cite{mestres2025characterization}. For high-order barrier functions,
Marchese \emph{et al.} \cite{marchese2025hocbf} exhibited a particularly
important failure mechanism: in a flexible two-mass system, the
nominal and filtered dynamics are both stable, yet the complete closed
loop diverges because it repeatedly switches between them. Stability
of the individual modes is therefore insufficient to guarantee
stability of the safety-filtered system.

The joint design of safe and stabilising feedback has been studied
using modified quadratic programs, compatibility conditions, and
optimisation-based constructions with certified regions of attraction
\cite{jankovic2018robust,cortez2022compatibility,mestres2023optimization}.
Most recently, Mestres \emph{et al.}
\cite{mestres2026safefeedback} characterised linear state-feedback
gains that globally satisfy high-order barrier constraints while
stabilising linear systems. These approaches synthesise feedback laws
that simultaneously meet safety and stability requirements. The
complementary question considered here is whether stability is
preserved when a minimally invasive safety filter is wrapped around a
nominal controller, thereby creating state-dependent switching.

For linear systems subject to a single affine constraint, Mestres
\emph{et al.} \cite{mestres2026dynamical} showed that the
quadratic-program safety filter produces a state-dependent piecewise
affine system with a nominal mode and a filtered mode. Their analysis
relates equilibria and unbounded trajectories to the eigenstructure of
the filtered mode and gives sufficient conditions for global
exponential stability in terms of invariant zeros and a common
quadratic Lyapunov function. It also provides an LMI construction for
a stabilising nominal controller. These results apply to general
relative degree, but they do not expose the special structure that
arises when the constraint has relative degree equal to the system
order. Consequently, they do not directly characterise which nominal
controllers admit a common quadratic certificate in this case.

The full-relative-degree case deserves separate treatment because the
constrained output and its derivatives describe the complete state,
leaving no internal dynamics. It includes chains of integrators with a
constraint on the terminal output \cite{Abel2022}, as well as the
flexible two-mass system of \cite{marchese2025hocbf}. In this setting,
the invariant-zero requirement of
\cite[Cor.~1]{mestres2026dynamical} is automatically satisfied, and
the remaining stability question concerns the interaction between the
nominal and filtered modes.

The conference version of this work established the filtered-mode
structure, the inactive-filter design, and the positive-real
characterisation. The present extended version builds on those results
by developing their passivity interpretation, deriving an explicit
common storage function, and analysing feasibility of the existing
LMI synthesis. Its contributions are:
\begin{enumerate}
    \item We show that the filtered-mode spectrum is fixed by the
    barrier tuning and is independent of the plant, nominal
    controller, and quadratic-program weighting. Full relative degree
    also forces the input matrix to have rank one.

    \item We derive a nominal controller that keeps the safety filter
    inactive while guaranteeing safety and global exponential
    stability.

    \item We prove that the nominal and filtered modes admit an SCQLF
    if and only if the ratio of the nominal characteristic polynomial
    to the barrier polynomial is SSPR. The associated storage function
    is also an explicit common quadratic Lyapunov function.

    \item We obtain an explicit interval of admissible nominal gains
    and a reduced condition in one matrix variable that determines
    feasibility of the LMI synthesis in
    \cite{mestres2026dynamical}.

    \item We explain and resolve the instability observed in the
    flexible two-mass example of \cite{marchese2025hocbf}.
\end{enumerate}
\section{Preliminaries}

\subsection{Notation}

We write $\N,\R,\R_{\geq0},\C$ for the natural, real, non-negative real and complex numbers, respectively, $0_n$ and $I_n$ for the $n$-dimensional zero vector and the $n \times n$ identity, respectively, $e_i$ for the $i$-th canonical basis vector of $\R^n$, and $[n]:=\{1,\dots,n\}$. For $S\subset\R^n$, $\Int(S)$ and $\partial S$ are its interior and boundary. For a square symmetric matrix $A=A^\top$, $A\succ0$ and $A\prec0$ mean that $A$ is positive and negative definite, respectively. For $G\succ0$ and $x\in\R^n$, define $\norm{x}_G=\sqrt{x^\top Gx}$. We write $\det$, $\rank$, and $\spec$ for determinant, rank, and spectrum, $\Real$ and $\Imag$ for real and imaginary parts, and $\circledast$ for discrete convolution. Define $p_M(s):=\det(sI_n-M)$ as the characteristic polynomial of a square matrix $M$, and $[-\infty,\infty]:=\R\cup\{-\infty,+\infty\}$.

\subsection{Common Quadratic Lyapunov Functions}

\begin{definition}\label{def:cqlf}
    Hurwitz matrices $A_1,A_2\in\R^{n\times n}$ are said to admit a \emph{strong common quadratic Lyapunov function} (SCQLF) if there exists a $P=P^\top\succ0$ with $A_i^\top P+PA_i\prec0$ for $i=1,2$.
\end{definition}

\begin{lemma}[{\cite[Thm.~1]{king2006cqlf}}]\label{lem:king}
    Let $A_1,A_2\in\R^{n\times n}$ be Hurwitz with $\rank(A_1-A_2)=1$. Then $A_1$ and $A_2$ admit an SCQLF if and only if $A_1A_2$ has no negative real eigenvalue.
\end{lemma}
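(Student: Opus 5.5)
Since this lemma is quoted from \cite[Thm.~1]{king2006cqlf}, the paper can simply cite it. If I were to prove it directly, I would go through the frequency domain and the Kalman--Yakubovich--Popov (KYP) lemma. The first step is to use the rank-one hypothesis to write $A_2=A_1-gh^\top$ with $g,h\in\R^n$. The existence of $P\succ0$ with $A_1^\top P+PA_1\prec0$ and $A_2^\top P+PA_2\prec0$ is then a standard circle-criterion question for the sector $[0,1]$ nonlinearity $u=-k\,h^\top x$, $k\in[0,1]$. Because the Lyapunov inequalities are convex in the system matrix, an SCQLF for the two endpoints is one for every $A_1-k gh^\top$, $k\in[0,1]$. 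By the strict KYP lemma (with $A_1$ Hurwitz and the strict inequalities), such a $P$ exists if and only if
\begin{equation*}
1+\Real\, h^\top(j\omega I-A_1)^{-1}g>0\quad\text{for all }\omega\in\R\cup\{\infty\}.
\end{equation*}
This step requires arranging the strict version carefully, since the value at $\omega=\infty$ equals $1$.

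Next I would rewrite the frequency condition using the rank-one determinant identity
\begin{equation*}
1+h^\top(j\omega I-A_1)^{-1}g=\frac{\det(j\omega I-A_2)}{\det(j\omega I-A_1)}=:r(j\omega).
\end{equation*}
The condition becomes $\Real\, r(j\omega)>0$ for all $\omega$. Both $A_1$ and $A_2$ are Hurwitz, so $r(0)=\det A_2/\det A_1>0$ and $r(\infty)=1$. Also $r(j\omega)\neq0$ for all $\omega$. By continuity, the condition fails exactly when $r(j\omega)$ reaches the imaginary axis for some finite $\omega\neq0$. Equivalently, for some $\omega>0$, the number $p_{A_2}(j\omega)\,p_{A_1}(-j\omega)$ is purely imaginary, which happens if and only if $r(j\omega)+r(-j\omega)=0$.

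The main obstacle is the algebraic link between this condition and the spectrum of $A_1A_2$. The target is an identity of the form
\begin{equation*}
\det(\omega^2 I+A_1A_2)=c\,\bigl(p_{A_2}(j\omega)\,p_{A_1}(-j\omega)+p_{A_2}(-j\omega)\,p_{A_1}(j\omega)\bigr)
\end{equation*}
for some nonzero constant $c$, which may depend on $\det A_1$ and sign conventions. I would first try to factor
\begin{equation*}
(j\omega I+A_1)\,(j\omega I-A_1)^{-1}\ \text{ and }\ (j\omega I-A_2)(j\omega I+A_2)^{-1}
\end{equation*}
as Cayley-type transforms. I would also expand $(j\omega I-A_1)(j\omega I+A_2)=-(\omega^2I+A_1A_2)-j\omega\,gh^\top$, which shows that $\omega^2I+A_1A_2$ differs from a product of resolvent-like factors by a rank-one term. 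Applying the matrix determinant lemma to that term, and symmetrising in $\omega\mapsto-\omega$, should cancel the imaginary part and produce the sum $r(j\omega)+r(-j\omega)$ up to a nonvanishing factor.

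Given the identity, $A_1A_2$ has an eigenvalue $-\omega^2<0$ if and only if $\Real\, r(j\omega)=0$ for some $\omega>0$. Combined with the KYP equivalence above, this proves both directions of Lemma~\ref{lem:king}. The case $\omega=0$ needs no separate treatment, because $\det(A_1A_2)=\det A_1\det A_2>0$, so zero is never an eigenvalue of $A_1A_2$.
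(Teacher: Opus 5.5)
The paper does not prove this lemma; it imports it from King and Shorten. Your sketch is therefore a reconstruction of the classical argument, and it should be judged as such.

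\textbf{The algebraic half is sound, and easier than you expect.} Let $M:=\omega^2I+A_1A_2$. Then $(j\omega I-A_1)(-j\omega I-A_2)=M+j\omega\,gh^\top$. The map $t\mapsto\det(M+t\,gh^\top)=\det M+t\,h^\top\adj(M)\,g$ is affine, so adding the $\pm\omega$ versions gives
\[
\det(\omega^2I+A_1A_2)=\Real\bigl\{p_{A_2}(j\omega)\,p_{A_1}(-j\omega)\bigr\}=|p_{A_1}(j\omega)|^2\,\Real\{r(j\omega)\},
\]
so your constant is $c=1/2$. Use this adjugate form rather than $\det M\,(1+h^\top M^{-1}g)$, because $M$ is singular exactly at the frequencies of interest. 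The Cayley-transform detour is not needed. Combined with $r(0)>0$, $\Real\{r(j\omega)\}\to1$ and the intermediate value theorem, this correctly shows that the frequency inequality holds if and only if $A_1A_2$ has no negative real eigenvalue.

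\textbf{The gap is in your first step.} The strict KYP lemma equates the frequency inequality with the single bordered LMI
\[
\begin{bmatrix} A_1^\top P+PA_1 & Pg-h\\ g^\top P-h^\top & -2\end{bmatrix}\prec0,
\]
not with the pair of Lyapunov inequalities. The bordered LMI does imply the pair: its Schur complement, after completing the square, gives $A_2^\top P+PA_2\prec-\tfrac12(Pg+h)(Pg+h)^\top$. So the direction ``no negative real eigenvalue $\Rightarrow$ SCQLF'' goes through.

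The direction ``SCQLF $\Rightarrow$ frequency inequality'' does not follow from KYP. Your convexity remark alone does not supply it either. A common $P$ for all $A_1-kgh^\top$, $k\in[0,1]$, only makes those matrices Hurwitz. That keeps the Nyquist curve of $h^\top(j\omega I-A_1)^{-1}g$ off the ray $(-\infty,-1]$, which is weaker than $\Real\{h^\top(j\omega I-A_1)^{-1}g\}>-1$.

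You need one of two repairs:
\begin{enumerate}
\item The lossless S-procedure. For fixed $x\neq0$, $\dot V$ is affine in the sector variable, so it is negative on the whole sector set. The S-lemma then produces the bordered LMI.
\item The standard elementary necessity argument, which is simpler. If $P$ is an SCQLF, then $PA_2^{-1}+A_2^{-\top}P=A_2^{-\top}(A_2^\top P+PA_2)A_2^{-1}\prec0$. Hence $A_1+\mu A_2^{-1}$ is Hurwitz, so $A_1A_2+\mu I=(A_1+\mu A_2^{-1})A_2$ is nonsingular for every $\mu\geq0$. This direction needs neither the rank-one hypothesis nor the frequency domain.
\end{enumerate}

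With this repair your proof is complete. Because Rantzer's strict KYP lemma needs only $A_1$ Hurwitz, your route also avoids the minimality hypothesis that Lemma~\ref{lem:circle} carries.
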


\begin{lemma}[{\cite{shorten2004unifying}}]\label{lem:circle}
    Let $A\in\R^{n\times n}$ be Hurwitz, with $b,c\in\R^n$ such that $(A,b,c^\top)$ is minimal, and let $\lambda\in\R$. Then $A$ and $A-\lambda bc^\top$ admit an SCQLF if and only if
    \begin{equation}\label{eq:circle}
      1+\lambda\Real\!\left\{c^\top(j\omega I_n-A)^{-1}b\right\}>0
      \quad\text{for all }\omega\in\R .
    \end{equation}
\end{lemma}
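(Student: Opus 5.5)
The plan is to derive this as a consequence of Lemma~\ref{lem:king}, since $\rank\bigl(A-(A-\lambda bc^\top)\bigr)=\rank(\lambda bc^\top)\le 1$. Following Definition~\ref{def:cqlf}, existence of an SCQLF includes Hurwitzness of both matrices. So the argument has two parts: (i) translate ``$A(A-\lambda bc^\top)$ has no negative real eigenvalue'' into a frequency-domain statement, and (ii) show that \eqref{eq:circle} itself forces $A-\lambda bc^\top$ to be Hurwitz. Throughout, write $G(s):=c^\top(sI_n-A)^{-1}b$. Note that $\Real G(j\omega)$ is even in $\omega$ and tends to $0$ as $|\omega|\to\infty$. (For the trivial case $\lambda=0$ the claim is immediate, so assume $\lambda\neq0$.)

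\textbf{Step 1 (eigenvalue computation).} A negative real eigenvalue $-\omega^2$, $\omega>0$, of $A(A-\lambda bc^\top)$ means $\det(A^2+\omega^2 I_n-\lambda Abc^\top)=0$. Since $A$ is Hurwitz, $A^2+\omega^2I_n=(A-j\omega I_n)(A+j\omega I_n)$ is invertible. The matrix determinant lemma then gives
\[
\det(A^2+\omega^2I_n-\lambda Abc^\top)=\det(A^2+\omega^2I_n)\bigl(1-\lambda c^\top(A^2+\omega^2I_n)^{-1}Ab\bigr).
\]
Using $(A-j\omega I_n)^{-1}+(A+j\omega I_n)^{-1}=2(A^2+\omega^2I_n)^{-1}A$, we get $c^\top(A^2+\omega^2I_n)^{-1}Ab=-\Real G(j\omega)$. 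Hence $A(A-\lambda bc^\top)$ has a negative real eigenvalue if and only if $1+\lambda\Real G(j\omega)=0$ for some $\omega>0$. Similarly, $\det\bigl(A(A-\lambda bc^\top)\bigr)=\det(A)^2\,(1+\lambda G(0))$, which covers $\omega=0$.

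\textbf{Step 2 (necessity).} Suppose an SCQLF exists. Then $A-\lambda bc^\top$ is Hurwitz, hence nonsingular, so $1+\lambda G(0)\neq0$. By Lemma~\ref{lem:king} and Step~1, $1+\lambda\Real G(j\omega)\neq0$ for all $\omega>0$. The function $\omega\mapsto1+\lambda\Real G(j\omega)$ is continuous, even, nonvanishing on $\R$, and tends to $1$ at infinity. It is therefore positive everywhere, which is \eqref{eq:circle}.

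\textbf{Step 3 (sufficiency).} This is the step needing care, because Lemma~\ref{lem:king} presupposes that $A-\lambda bc^\top$ is Hurwitz. I would use a homotopy $A_t:=A-t\lambda bc^\top$ for $t\in[0,1]$. For $\omega\in\R$,
\[
\det(j\omega I_n-A_t)=\det(j\omega I_n-A)\bigl(1+t\lambda G(j\omega)\bigr),
\]
and $\Real\bigl(1+t\lambda G(j\omega)\bigr)=(1-t)+t\bigl(1+\lambda\Real G(j\omega)\bigr)>0$. So no $A_t$ has an eigenvalue on the imaginary axis. Eigenvalues depend continuously on $t$ and $A_0=A$ is Hurwitz, so $A_1=A-\lambda bc^\top$ is Hurwitz.

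Then \eqref{eq:circle} and Step~1 exclude negative real eigenvalues of $A(A-\lambda bc^\top)$, and Lemma~\ref{lem:king} yields the SCQLF. Minimality of $(A,b,c^\top)$ is not used in this route. It matters only if one instead proves the result via the KYP lemma, which is the fallback should the rank-one route hit a snag.
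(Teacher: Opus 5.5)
Your proof is correct. The paper gives no proof of this lemma; it imports it from \cite{shorten2004unifying}. Your route is the standard way to obtain the frequency-domain statement from the rank-one eigenvalue test: reduce to Lemma~\ref{lem:king} via the identity $\det\bigl(A(A-\lambda bc^\top)+\omega^2 I_n\bigr)=\det(A^2+\omega^2 I_n)\bigl(1+\lambda\Real G(j\omega)\bigr)$, with $G(s):=c^\top(sI_n-A)^{-1}b$. Your resolvent identity and determinant computations check out. The separate treatment of $\omega=0$, through nonsingularity of $A-\lambda bc^\top$, correctly closes the intermediate-value argument in the necessity direction.

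The only part of this that the paper proves itself is the content of your Step~3. Lemma~\ref{lem:hurwitz} shows that the frequency inequality forces the rank-one perturbation to be Hurwitz, by applying the Nyquist criterion to $1+c^\top(sI_n-A)^{-1}b$. You instead use the homotopy $t\mapsto A-t\lambda bc^\top$ with the convex bound $\Real\bigl(1+t\lambda G(j\omega)\bigr)=(1-t)+t\bigl(1+\lambda\Real G(j\omega)\bigr)>0$, which keeps the eigenvalues off the imaginary axis. The two arguments have the same strength; yours avoids the contour argument.

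Your remark on minimality is also right. On this route it only guarantees $b\neq0$ and $c\neq0$, so that $\lambda bc^\top$ has rank exactly one for $\lambda\neq0$, as Lemma~\ref{lem:king} requires; the rank-zero case is trivial. A useful by-product concerns Theorem~\ref{thm:main}. There $v_1\neq0$ and $A_0\neq\Ac$ already give rank one, so the equivalence 1)$\Leftrightarrow$2) does not need condition~\eqref{eq:nondegenerate}. Minimality is needed there only for the KYP step 2)$\Leftrightarrow$3) and for very strict passivity via Lemma~\ref{lem:sspr-vsp}.
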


The following lemma shows directly that the positive-real condition implies that the rank-one perturbation is Hurwitz.

\begin{lemma}\label{lem:hurwitz}
    Let $A\in\R^{n\times n}$ be Hurwitz and let $b,c\in\R^n$ satisfy $\rank(bc^\top)=1$. If $1+\Real\{c^\top(j\omega I_n-A)^{-1}b\}>0$ for all $\omega\in\R$, then $A-bc^\top$ is Hurwitz.
\end{lemma}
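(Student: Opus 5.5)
The plan is a homotopy (root-locus) argument on the rank-one perturbation, using the determinant identity for rank-one updates. For $\lambda\in[0,1]$ and $s\notin\spec(A)$,
\begin{equation*}
\det\!\left(sI_n-A+\lambda bc^\top\right)=p_A(s)\left(1+\lambda\, c^\top(sI_n-A)^{-1}b\right).
\end{equation*}
This follows from the matrix determinant lemma applied to $(sI_n-A)\bigl(I_n+\lambda(sI_n-A)^{-1}bc^\top\bigr)$. Write $G(s):=c^\top(sI_n-A)^{-1}b$. Since $A$ is Hurwitz, $G$ is analytic on the closed right half-plane. Its restriction to the imaginary axis, $G(j\omega)$, is well defined for all $\omega\in\R$, and $G(j\omega)\to0$ as $|\omega|\to\infty$.

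The first step is to show that no eigenvalue of $A-\lambda bc^\top$ lies on the imaginary axis for any $\lambda\in[0,1]$. At $\lambda=0$ this holds because $A$ is Hurwitz. For $\lambda\in(0,1]$, suppose $j\omega$ were an eigenvalue. Since $j\omega\notin\spec(A)$, the identity forces $1+\lambda G(j\omega)=0$, and in particular $\lambda\Real G(j\omega)=-1$. Two cases arise:
\begin{itemize}
\item If $\Real G(j\omega)\ge0$, this is impossible, since the left side is nonnegative.
\item If $\Real G(j\omega)<0$, then $\lambda\Real G(j\omega)\ge\Real G(j\omega)>-1$ by the hypothesis and $\lambda\le1$, again a contradiction.
\end{itemize}
Hence $1+\lambda\Real G(j\omega)>0$ for every $\lambda\in[0,1]$ and every $\omega$. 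So $p_{A-\lambda bc^\top}$ never vanishes on $j\R$.

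The second step is continuity. The eigenvalues of $A-\lambda bc^\top$ depend continuously on $\lambda$, since they are the roots of a monic polynomial whose coefficients are continuous in $\lambda$, with fixed degree $n$ so that no root escapes to infinity. All $n$ roots lie in the open left half-plane at $\lambda=0$. A root can reach the closed right half-plane only by crossing the imaginary axis, which the first step excludes. Hence all roots remain in the open left half-plane at $\lambda=1$, i.e.\ $A-bc^\top$ is Hurwitz.

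The only delicate point is the root-continuity and no-crossing argument. A clean alternative is the argument principle: on a large D-contour, $1+\lambda G$ has no zeros on the imaginary axis and tends to $1$ at infinity, so the winding number of $1+\lambda G(j\omega)$ about the origin is constant in $\lambda$. It equals zero at $\lambda=0$, and since $1+\lambda G$ has no poles in the closed right half-plane, it has no zeros there either. I would present the continuity version and mention this one as a remark. The rank-one assumption only ensures the setting is nontrivial; the argument itself works for any $b,c$.
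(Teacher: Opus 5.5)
Your proof is correct, but it takes a different route from the paper. The paper argues at the endpoint only. It writes $F(s)=1+c^\top(sI_n-A)^{-1}b=p_{A-bc^\top}(s)/p_A(s)$ and notes that the hypothesis places $F(j\omega)$ in the open right half-plane, with $F(s)\to1$ as $|s|\to\infty$. The Nyquist plot therefore cannot encircle the origin. Since $F$ has no poles in $\Real\{s\}\geq0$, the argument principle gives no zeros there either.

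You instead embed the perturbation in the segment $A-\lambda bc^\top$, $\lambda\in[0,1]$. Your case analysis shows the hypothesis persists for every $\lambda$ in the segment, so no eigenvalue ever lies on $j\R$. You then conclude by continuity of the roots of a monic polynomial of fixed degree.

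Your route replaces the Nyquist criterion with an elementary intermediate-value argument. To make the no-crossing step airtight, apply the intermediate value theorem to the spectral abscissa $\lambda\mapsto\max\{\Real\{\mu\}:\mu\in\spec(A-\lambda bc^\top)\}$. It is continuous, negative at $\lambda=0$, and by your first step never zero on $[0,1]$. Your route also gives the slightly stronger conclusion that every matrix on the segment is Hurwitz, which fits the family $A_0(\lambda)$ used in Corollary~\ref{cor:interval}.

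The paper's route is shorter and needs no homotopy. In your argument-principle remark, the deformation in $\lambda$ is unnecessary: at $\lambda=1$ the curve $1+G(j\omega)$ already lies in the open right half-plane, so its winding number about the origin is zero outright. You are also right that neither argument uses $\rank(bc^\top)=1$.
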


\begin{proof}
Define \[ F(s):=1+c^\top(sI_n-A)^{-1}b. \] Since $A$ is Hurwitz, $sI_n-A$ is nonsingular for $\Real\{s\}\geq0$. Thus, by the matrix determinant lemma, \[ F(s) =\frac{\det\!\left(sI_n-(A-bc^\top)\right)} {\det(sI_n-A)}, \qquad \Real\{s\}\geq0. \] The hypothesis gives $\Real \{F(j\omega)\}>0$ for all $\omega\in\R$, so the Nyquist plot of $F$ lies in the open right half-plane and hence does not encircle the origin. Since $F$ has no poles in the closed right half-plane, the Nyquist criterion implies that $F$ has no zeros there. Therefore, \[ \det\!\left(sI_n-(A-bc^\top)\right)\neq0, \qquad \Real\{s\}\geq0, \] and hence $A-bc^\top$ is Hurwitz.
\end{proof}

\subsection{Positive Realness and Passivity}

Following \cite{brogliato2020dissipative}, a system with input $u$ and output $y$ where $u(t),y(t)\in \R^m$ is \emph{very strictly passive} (VSP) if there exist $\beta\in\R$, and $\delta>0$, $\varepsilon>0$ such that
\begin{equation}\label{eq:vsp}
\begin{aligned}
  \int_0^t\! y(\tau)^\top u(\tau)\,d\tau
  \geq{}& \beta+\delta\!\int_0^t\! u(\tau)^\top u(\tau)\,d\tau\\
  &+\varepsilon\!\int_0^t\! y(\tau)^\top y(\tau)\,d\tau,
\end{aligned}
\end{equation}
for all admissible $u$ and all $t\geq0$.

\begin{definition}[{\cite[Def.~2.78]{brogliato2020dissipative}}]\label{def:sspr}
    A SISO real-rational transfer function $H: \mathbb{C} \to \mathbb{C}$ is \emph{strongly strictly positive real} (SSPR) if $H$ is analytic in $\Real\{s\}\geq0$ and $\Real\{H(j\omega)\} \geq \delta>0$ for all $\omega\in[-\infty,\infty]$ and some $\delta>0$.
\end{definition}

\begin{lemma}[{\cite[Thm.~2.81]{brogliato2020dissipative}}]\label{lem:sspr-vsp}
    Let $(A,b,c,d)$ be a minimal realisation of the transfer function $H$. Then $H$ is SSPR if and only if the associated system is VSP.
\end{lemma}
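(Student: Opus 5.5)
The statement in question is Lemma~\ref{lem:sspr-vsp}: for a minimal realisation $(A,b,c,d)$ of $H$, SSPR is equivalent to VSP. The plan is to pass through the strict Kalman--Yakubovich--Popov (KYP) lemma in both directions. The storage function $V(x)=\tfrac12 x^\top Px$ ties the time-domain inequality \eqref{eq:vsp} to a frequency-domain inequality on $H(j\omega)$.

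\emph{Step 1 (SSPR $\Rightarrow$ VSP).} Since $H$ is analytic in $\Real\{s\}\ge0$ and the realisation is minimal, $A$ is Hurwitz. The bound $\Real\{H(j\omega)\}\ge\delta$ for all $\omega\in[-\infty,\infty]$ gives $2d\ge 2\delta>0$ at $\omega=\infty$. I will choose $\delta'\in(0,\delta)$ and $\varepsilon>0$ small. Then $\Real\{H(j\omega)\}-\delta'-\varepsilon|H(j\omega)|^2\ge \delta-\delta'-\varepsilon\sup_\omega|H(j\omega)|^2>0$ uniformly, where the supremum is finite because $A$ is Hurwitz.

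This is a strict frequency-domain inequality for the supply rate $w(u,y)=yu-\delta' u^2-\varepsilon y^2$. Expressed in the variables $(x,u)$, it becomes a quadratic form. The strict KYP lemma (Hurwitz $A$, controllable pair $(A,b)$) then yields $P=P^\top$ such that the matrix inequality
\[
\begin{bmatrix} A^\top P+PA & Pb \\ b^\top P & 0\end{bmatrix} \preceq \Pi,
\]
holds, where $\Pi$ is the matrix of $w$ in $(x,u)$. Observability is used to show $P\succ0$. Along trajectories this gives $\dot V\le w(u,y)$, and integrating yields \eqref{eq:vsp} with $\beta=-V(x(0))$.

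\emph{Step 2 (VSP $\Rightarrow$ SSPR).} Take $x(0)=0$, which absorbs $\beta$ after a standard argument using minimality: $\beta$ can be taken as $-V(x_0)$ for an available storage. Apply \eqref{eq:vsp} with $t\to\infty$ to $L_2$ inputs and use Parseval. This gives $\int \Real\{H(j\omega)\}|\hat u|^2\,d\omega\ge\delta\int|\hat u|^2+\varepsilon\int|H\hat u|^2$.

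The $\varepsilon$ term first forces $H$ to have finite $L_2$ gain, i.e. $\|y\|_2\le \varepsilon^{-1}\|u\|_2$ by Cauchy--Schwarz. Hence $H$ is stable. Minimality then makes $A$ Hurwitz, so $H$ is analytic on $\Real\{s\}\ge0$. Localising $\hat u$ near any $\omega_0$ (narrow-band inputs) gives $\Real\{H(j\omega_0)\}\ge\delta$. By continuity the bound extends to $\omega=\pm\infty$, since $H(\infty)=d$ and $\Real\{H(j\omega)\}\to d$.

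\emph{Main obstacle.} The delicate part is the KYP step in direction 1. It requires checking that the chosen supply rate gives a uniformly strict inequality, including at $\omega=\infty$; this is where the SSPR requirement over $[-\infty,\infty]$, rather than merely $\R$, is essential. It also requires verifying $P\succ0$ from minimality. In direction 2, the main subtlety is handling $\beta$ and nonzero initial conditions, which I would resolve by restricting to $x(0)=0$ and noting that for a minimal system $\beta$ may be taken as $0$ there.
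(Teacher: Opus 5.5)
Your proof is correct, but it does not follow the paper, because the paper gives no argument for this lemma. It imports the equivalence directly from \cite{brogliato2020dissipative}, whereas you reprove it.

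\begin{itemize}
\item In the forward direction you fold both strictness margins into the single supply rate $w=yu-\delta'u^2-\varepsilon y^2$. You check a uniformly strict frequency-domain inequality on $\R\cup\{\infty\}$; this is exactly where the $\omega=\pm\infty$ clause of Definition~\ref{def:sspr}, i.e.\ $d\ge\delta>0$, is used. The strict KYP lemma then produces a quadratic storage.
\item In the converse you extract a finite $L_2$ gain from the $\varepsilon$-term, which gives stability. You then read off $\Real\{H(j\omega_0)\}\ge\delta+\varepsilon|H(j\omega_0)|^2$ from narrow-band inputs.
\end{itemize}

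The citation buys brevity. Your route buys an explicit certificate $x^\top Px$, $P\succ0$, produced by the same KYP machinery the paper relies on in Theorem~\ref{thm:main} and Proposition~\ref{prop:storage}. That makes the later remark that the passivity storage is, unmodified, the common Lyapunov function transparent rather than borrowed.

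A few details could be tightened.
\begin{itemize}
\item Eliminating $\beta$ at $x(0)=0$ needs neither minimality nor an available storage. Taking $t=0$ forces $\beta\le0$. Replacing $u$ by $\kappa u$ and letting $\kappa\to\infty$ then gives $\beta=0$ by homogeneity.
\item If you use the strict LMI that the strict KYP lemma actually delivers, $P\succ0$ follows from its $(1,1)$ block, $A^\top P+PA\prec-\varepsilon cc^\top\preceq0$, together with $A$ Hurwitz. Observability is needed only for the non-strict version you wrote.
\item With your normalisation $V=\tfrac12x^\top Px$, the LMI you wrote yields $\dot V\le\tfrac12 w$. Either drop the $\tfrac12$ or take $\beta=-2V(x(0))$.
\item A causal $L_2$ input cannot have a compactly supported spectrum, so ``localising $\hat u$'' must mean long windowed sinusoids. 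A simpler option is to apply the truncated inequality to $u=\cos(\omega_0 t)$, divide by $T$, and let $T\to\infty$. The transient decays because $A$ is Hurwitz, which you have already established, so this avoids Parseval and even the need to remove $\beta$.
\end{itemize}
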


Consider the transfer function
\[
    g(s)=c^\top(sI-A)^{-1}b.
\]
Condition~\eqref{eq:circle} can therefore be interpreted as an SSPR condition on the return difference $1+\lambda g$.

\subsection{Safety Filters for Linear Systems and Affine Constraints}
\label{subsec:filter-setup}

Consider a continuous-time linear system 
\[
    \dot x = Ax+Bu,
\]
with $x\in\R^n$, $u\in\R^m$, $A\in\R^{n\times n}$, and $B\in\R^{n\times m}$. Let 
\[
    h_0(x):=c^\top x+d,
\]
where $c\in\R^n$ and $d\in\R$, and define the corresponding constraint set 
\[
    \mathcal{C}_0:=\{x:h_0(x)\geq0\}.
\]

Suppose $h_0$ has relative degree $r\in[n]$ on $\R^n$
\cite[Def.~3]{xiao2022hocbf}, and define
$h_i=\dot h_{i-1}+\alpha_i h_{i-1}$ and
$\mathcal{C}_i=\{x:h_i(x)\geq0\}$, where $\alpha_i>0$, for
$i\in[r-1]$. By \cite{xiao2022hocbf}, any locally Lipschitz controller satisfying
\begin{equation}\label{eq:forward-invariance}
(\nabla h_{r-1}(x))^\top(Ax+Bu)+\alpha_r h_{r-1}(x)\geq0
\end{equation}
renders $\mathcal{C}:=\cap_{i=0}^{r-1}\mathcal{C}_i$ forward invariant.

\begin{equation}\label{eq:alpha-def}
\varphi(s):=\prod_{j=1}^{r}(s+\alpha_j),\qquad
\alpha:=\prod_{i=1}^{r}\alpha_i .
\end{equation}
Using $c^\top A^iB=0_m^\top$ for $i\leq r-2$, condition~\eqref{eq:forward-invariance} becomes \cite[Lem.~1]{mestres2026dynamical}
\begin{equation}\label{eq:hocbf-linear}
  c^\top\varphi(A)x + c^\top A^{r-1}Bu + \alpha d \geq 0 .
\end{equation}

Given a nominal controller $k(x)=-Kx$ and a weight $G\succ0$, the safety filter is
\begin{equation}\label{eq:filter}
  u^*(x)=\argmin_{u\in\R^m}\tfrac12\norm{u-k(x)}_G^2
  \quad\text{s.t. }\eqref{eq:hocbf-linear},
\end{equation}
illustrated in Fig.~\ref{fig:interconnection}(a).

\begin{assumption}\label{as:origin}
For the safe set $\mathcal C_0$ defined in
Section~\ref{subsec:filter-setup}, $0_n\in\Int(\mathcal C_0)$, or
equivalently, $d>0$.
\end{assumption}

\begin{assumption}\label{as:reldeg}
The affine constraint $h_0$ introduced in
Section~\ref{subsec:filter-setup} has relative degree $r=n$ on
$\R^n$.
\end{assumption}

Assumption~\ref{as:origin} is that of \cite[Sec.~II]{mestres2026dynamical}, which also
assumes $(A,B)$ stabilisable; Lemma~\ref{lem:minimal} shows that
Assumption~\ref{as:reldeg} implies that $(A,B)$ is stabilisable.

Following \cite[Sec.~II]{mestres2026dynamical}, define
\begin{align}
  \theta &:= \norm{B^\top(A^\top)^{r-1}c}_{G^{-1}}^2, &
  \Lambda &:= c^\top A^{r-1}B,\label{eq:lambda} \\
  v_1 &:= -\frac{BG^{-1}B^\top(A^\top)^{r-1}c}{\theta},\label{eq:v1} &
  \Ld &:= \frac{G^{-1}\Lambda^\top}{\theta},
\end{align}
so that $\Lambda\Ld=1$ and $v_1=-B\Ld$, together with
\begin{equation}\label{eq:v2}
  v_3^\top := c^\top\varphi(A),\quad
  v_2^\top := v_3^\top-\Lambda K,\quad
  \eta(x) := v_2^\top x+\alpha d,
\end{equation}
and set $R_+:=\{x:\eta(x)\geq0\}$, $R_-:=\R^n\setminus R_+$. Relative degree $r$ gives $\Lambda\neq0_m^\top$, so \eqref{eq:hocbf-linear} is feasible everywhere and \eqref{eq:filter} has the closed form \cite[Sec.~II]{mestres2026dynamical}
\begin{equation}\label{eq:ustar}
  u^*(x)=
  \begin{cases}
    -Kx, & x\in R_+,\\[2pt]
    -Kx-\dfrac{\eta(x)}{\theta}G^{-1}B^\top(A^\top)^{r-1}c, & x\in R_-,
  \end{cases}
\end{equation}
giving the piecewise affine closed loop
\begin{equation}\label{eq:closedloop}
  \dot x=
  \begin{cases}
    A_0x, & x\in R_+,\\
    \Ac x+\bc, & x\in R_-,
  \end{cases}
  \qquad
  \begin{aligned}
    A_0 &:= A-BK,\\
    \Ac &:= A_0+v_1v_2^\top,
  \end{aligned}
\end{equation}
with $\bc:=\alpha dv_1$. By \cite[Thm.~2]{xu2015robustness} and
\cite{mestres2025regularity}, $u^*$ is locally Lipschitz, so solutions are unique and $\mathcal{C}$ is forward invariant. We call $A_0$ the nominal mode and $\Ac$ the filtered mode.

\section{Safety Filters of Full Relative Degree}

This section identifies the structural consequences of full relative
degree. We first show that the constrained input--output channel
captures the complete state and forces a rank-one input matrix. We then
use the resulting coordinates to characterise the filtered mode and
construct a nominal controller that eliminates switching.

Throughout the subsequent sections, Assumptions~\ref{as:origin}--\ref{as:reldeg} hold. Let $\mathcal{O}\in\R^{n\times n}$ be the observability matrix of $(A,c^\top)$, whose $i$-th row is $c^\top A^{i-1}$.

\subsection{Minimality and Rank of the Input Matrix}

Full relative degree removes the internal zero dynamics that otherwise
enter the stability conditions of
\cite{mestres2026dynamical}. The following lemma makes this property
explicit and supplies the input--output identities used throughout the
subsequent analysis.

\begin{lemma}\label{lem:minimal}
Suppose Assumption~\ref{as:reldeg} holds, and let $\Lambda$ and $A_0$
be defined by \eqref{eq:lambda} and \eqref{eq:closedloop},
respectively.
The matrix $\mathcal{O}$ is invertible, the pair $(A,B)$ is controllable, and
\begin{equation}\label{eq:OB}
  \mathcal{O}B=e_n\Lambda,
  \quad\text{so}\quad
  B=\left(\mathcal{O}^{-1}e_n\right)\!\Lambda,\ \ \rank(B)=1 .
\end{equation}
Moreover, for every $K\in\R^{m\times n}$ and
$s\notin\spec(A_0)$,
\[
c^\top(sI_n-A_0)^{-1}B=\frac{\Lambda}{p_{A_0}(s)}.
\]
\end{lemma}

\begin{proof}
Relative degree $n$ gives $c^\top A^iB=0_m^\top$ for $i\leq n-2$ and $c^\top A^{n-1}B=\Lambda\neq0_m^\top$. Pick $w\in\R^m$ with $\Lambda w\neq0$ and set $b:=Bw$, so that $c^\top A^ib=0$ for $i\leq n-2$ and $c^\top A^{n-1}b=\Lambda w\neq0$. Let $\mathcal{A}:=\begin{bmatrix}b & Ab & \cdots & A^{n-1}b\end{bmatrix}$ be the controllability matrix of $(A,b)$. The $(i,j)$ entry of $\mathcal{O}\mathcal{A}$ is $c^\top A^{i+j-2}b$, which vanishes for $i+j\leq n$ and equals $\Lambda w$ for $i+j=n+1$. The matrix $\mathcal{O}\mathcal{A}$ therefore vanishes above its anti-diagonal and equals $\Lambda w$ along it, so $\det(\mathcal{O}\mathcal{A})=(-1)^{n(n-1)/2}(\Lambda w)^n\neq0$. Since both factors are square, $\mathcal O$ and $\mathcal A$ are invertible. Thus, $(A,c^\top)$ is observable and $(A,b)$ is controllable, so $(A,b,c^\top)$ is minimal. Since $b$ lies in the range of $B$, the pair $(A,B)$ is controllable as well.

The rows of $\mathcal{O}B$ are $c^\top A^{i-1}B$, which vanish for $i\leq n-1$ and equal $\Lambda$ for $i=n$, giving \eqref{eq:OB}.

For the last claim, $c^\top A^iB=0$ for $i\leq n-2$ and $A_0 = A-BK$ gives $c^\top A_0^i=c^\top A^i$ for all $i\leq n-1$. Thus, the Markov parameters of $(A_0,B,c^\top)$ agree with those of $(A,B,c^\top)$ up to order $n-1$ and $c^\top(sI-A_0)^{-1}B=\Lambda s^{-n}+O(s^{-n-1})$. Writing $(sI-A_0)^{-1}=\adj(sI-A_0)/p_{A_0}(s)$, the numerator
$N(s):=c^\top\adj(sI-A_0)B$ has degree at most $n-1$, while
$N(s)=p_{A_0}(s)\left(\Lambda s^{-n}+O(s^{-n-1})\right)\to\Lambda$ as $s\to\infty$, because $p_{A_0}$ is a monic polynomial of degree $n$. A polynomial of degree at most $n-1$ with a finite limit at infinity is constant, so $N\equiv\Lambda$.
\end{proof}

Since $B=B\Ld\Lambda$ by \eqref{eq:OB} and $\Lambda\Ld=1$,
\begin{equation}\label{eq:Bproj}
  B\left(I_m-\Ld\Lambda\right)=0 .
\end{equation}
Lemma~\ref{lem:minimal} also shows that the auxiliary system
$(A_0,BG^{-1}B^\top(A^\top)^{n-1}c,c^\top)$ of \cite[Cor.~1]{mestres2026dynamical} has transfer function $\theta/p_{A_0}(s)$ and hence no finite invariant zeros. Its invariant zero condition is therefore met automatically, and under Assumption~\ref{as:reldeg} the
corollary simplifies to its remaining requirement, that $A_0\Ac$ have no negative real eigenvalue, which is also the second hypothesis of \cite[Thm.~2]{mestres2026dynamical}.

\subsection{The Filtered Mode}

The preceding lemma allows the constrained output and its derivatives
to be used as coordinates for the full state. In these coordinates,
the active-filter dynamics contain no internal modes and are determined
directly by the barrier polynomial.

Following \cite{isidori1995nonlinear}, define the transverse variable
\begin{equation}\label{eq:zeta}
  \zeta := \begin{bmatrix} h_0(x) & \dot h_0(x) & \cdots & h_0^{(n-1)}(x)\end{bmatrix}^\top
  = \mathcal{O}x+d\, e_1,
\end{equation}
the second equality following from $h_0(x)=c^\top x+d$ and $h_0^{(i)}(x)=c^\top A^ix$. Write $\varphi(s)=s^n+\sum_{i=1}^{n}\beta_is^{n-i}$, so that $\beta_n=\alpha$, and let
\begin{equation}\label{eq:Gamma}
  \Gamma := \begin{bmatrix} 0_{n-1} & I_{n-1}\\ -\beta_n & -\beta_{n-1}\ \cdots\ -\beta_1
  \end{bmatrix}
\end{equation}
be the companion matrix of $\varphi$, so $p_\Gamma=\varphi$.

\begin{proposition}\label{prop:Atilde}
Suppose Assumptions~\ref{as:origin} and \ref{as:reldeg} hold. Let
$\varphi$ be defined by \eqref{eq:alpha-def}, let $\mathcal O$ be the
observability matrix defined above, and let $\Gamma$ be defined by
\eqref{eq:Gamma}. For every $K\in\R^{m\times n}$ and $G\succ0$, let
$\Ac$ and $\bc$ be defined by \eqref{eq:closedloop}. Then
\begin{equation}\label{eq:Atilde-form}
  \Ac=\mathcal{O}^{-1}\Gamma\mathcal{O}=A-B\Ld c^\top\varphi(A),
  \qquad \bc=d\,\mathcal{O}^{-1}\Gamma e_1 .
\end{equation}
Hence $\spec(\Ac)=\{-\alpha_i\}_{i=1}^n$, so $\Ac$ is Hurwitz and invertible, and
$\Ac$ depends on neither $K$ nor $G$.
\end{proposition}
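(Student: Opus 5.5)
Our plan is to first reduce $\Ac$ to the second closed form, and then conjugate by $\mathcal O$. From \eqref{eq:closedloop}, \eqref{eq:v1} and \eqref{eq:v2}, and using $v_1=-B\Ld$, we have
\[
\Ac=A-BK-B\Ld\left(c^\top\varphi(A)-\Lambda K\right)=A-B\Ld c^\top\varphi(A)-B\left(I_m-\Ld\Lambda\right)K .
\]
The last term vanishes by \eqref{eq:Bproj}, so $\Ac=A-B\Ld c^\top\varphi(A)$. This already shows that $K$ drops out. Since $B\Ld$ does depend on $G$, the independence from $G$ has to come from the conjugation step.

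Next we compute $\mathcal O\Ac$ row by row, using \eqref{eq:OB}: $\mathcal O B\Ld=e_n\Lambda\Ld=e_n$. This gives $\mathcal O\Ac=\mathcal O A-e_n c^\top\varphi(A)$, which contains no $G$. For rows $i\le n-1$, row $i$ of $\mathcal O A$ is $c^\top A^{i}$, which is row $i+1$ of $\mathcal O$. This matches the shift structure of the first $n-1$ rows of $\Gamma\mathcal O$.

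The last row is the main computation. It equals $c^\top A^n-c^\top\varphi(A)=-\sum_{i=1}^n\beta_i c^\top A^{n-i}$. The $n$-th row of $\Gamma\mathcal O$ is $-\beta_n c^\top-\beta_{n-1}c^\top A-\cdots-\beta_1 c^\top A^{n-1}$, which is the same expression. Hence $\mathcal O\Ac=\Gamma\mathcal O$, and since $\mathcal O$ is invertible by Lemma~\ref{lem:minimal}, $\Ac=\mathcal O^{-1}\Gamma\mathcal O$. The only care needed is the index bookkeeping between the ordering of the $\beta_i$ in $\Gamma$ and the power of $A$.

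For $\bc$, we have $\bc=\alpha d v_1=-\alpha d B\Ld$, so $\mathcal O\bc=-\alpha d\,e_n=-\beta_n d\,e_n$. The first column of $\Gamma$ is $-\beta_n e_n$, since the top-left block contributes zeros when $n\ge2$. Therefore $\Gamma e_1=-\beta_n e_n$, which gives $\bc=d\,\mathcal O^{-1}\Gamma e_1$. For $n=1$ we have $\Gamma=[-\beta_1]$ and the identity holds trivially.

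Finally, similarity gives $\spec(\Ac)=\spec(\Gamma)$, the roots of $p_\Gamma=\varphi$, which are $\{-\alpha_i\}$. All these are negative because $\alpha_i>0$, so $\Ac$ is Hurwitz. Since $0$ is not an eigenvalue, $\Ac$ is invertible. The representation $\mathcal O^{-1}\Gamma\mathcal O$ involves only $A$, $c$ and the $\alpha_i$, so $\Ac$ is independent of both $K$ and $G$.
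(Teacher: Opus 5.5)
Your proof is correct, but it reaches the similarity $\Ac=\mathcal O^{-1}\Gamma\mathcal O$ by a different route from the paper's.

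\textbf{The paper's route.} The paper argues dynamically. On $R_-$ the constraint \eqref{eq:hocbf-linear} is active, so $\varphi(\tfrac{d}{dt})h_0=0$ along solutions. In the transverse coordinates \eqref{eq:zeta} this reads $\dot\zeta=\Gamma\zeta$. Substituting $\zeta=\mathcal O x+de_1$ and inverting $\mathcal O$ gives $\Ac$ and $\bc$ together. Only afterwards does the paper derive the closed form $A-B\Ld c^\top\varphi(A)$ algebraically, exactly as in your first step.

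\textbf{Your route.} You start from that closed form and verify $\mathcal O\Ac=\Gamma\mathcal O$ row by row, using $\mathcal O B\Ld=e_n$. You then get $\bc$ from $\bc=\alpha d v_1$ and $\Gamma e_1=-\beta_n e_n$. The paper only records these identities afterwards, as a consistency check in \eqref{eq:Ov1}.

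\textbf{What each buys.} Your argument is a pure matrix identity. It needs no reading-off of coefficients from a relation stated along trajectories in $R_-$, a region that is empty when $v_2=0$. The paper's argument explains why the result holds. The active barrier imposes the scalar ODE $\varphi(\tfrac{d}{dt})h_0=0$ on a state that, by full relative degree, is described entirely by $h_0$ and its derivatives. That structural picture is what the paper builds on later.

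\textbf{One incorrect aside.} Your remark that ``$B\Ld$ does depend on $G$'' is wrong, though it does no damage to the proof. By \eqref{eq:OB} and $\Lambda\Ld=1$,
\[
B\Ld=\mathcal O^{-1}e_n\Lambda\Ld=\mathcal O^{-1}e_n .
\]
This is just your own identity $\mathcal O B\Ld=e_n$ read backwards, equivalently \eqref{eq:Ov1}. Only $\Ld$ on its own depends on $G$; the rank-one structure of $B$ removes that dependence. So independence from $G$ is already visible in the closed form $A-B\Ld c^\top\varphi(A)$ and does not have to come from the conjugation step.
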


\begin{proof}
On $R_-$ the constraint \eqref{eq:hocbf-linear} holds with equality, so
$\dot h_{n-1}=-\alpha_n h_{n-1}$ along solutions; with
$h_i=\dot h_{i-1}+\alpha_i h_{i-1}$ this gives $\varphi(\tfrac{d}{dt})h_0=0$, which in the coordinates \eqref{eq:zeta} reads $\dot\zeta=\Gamma\zeta$. Differentiating \eqref{eq:zeta} gives $\mathcal{O}\dot x=\Gamma(\mathcal{O}x+d e_1)$, and $\mathcal{O}$ is invertible by Lemma~\ref{lem:minimal}, which yields the first equality in \eqref{eq:Atilde-form} and the expression for $\bc$. For the second equality, use $v_1=-B\Ld$ and $v_2^\top=v_3^\top-\Lambda K$ in \eqref{eq:closedloop}:
\begin{align*}
  \Ac &= A-BK-B\Ld\!\left(v_3^\top-\Lambda K\right)\\
      &= A-B\Ld v_3^\top-B\!\left(I_m-\Ld\Lambda\right)\!K,
\end{align*}
and the last term vanishes by \eqref{eq:Bproj}. Similarity to $\Gamma$ gives $\spec(\Ac)=\spec(\Gamma)=\{-\alpha_i\}\subset\R_{<0}$.
\end{proof}

Since $v_1=-B\Ld$ and $v_3^\top=c^\top\varphi(A)$, the second expression in \eqref{eq:Atilde-form} also reads
\begin{equation}\label{eq:Atilde-rank1}
  \Ac = A+v_1v_3^\top,
  \qquad\text{equivalently}\qquad
  \Ac-v_1v_3^\top = A .
\end{equation}
Two further identities follow from the construction: by \eqref{eq:OB} and relative degree $n$,
\begin{equation}\label{eq:Ov1}
  \mathcal{O}v_1=- e_n, \qquad \Gamma e_1=-\alpha e_n,
\end{equation}
so $\bc=-d\alpha\mathcal{O}^{-1} e_n=\alpha dv_1$ as in \eqref{eq:closedloop}.

For $r<n$ the class-$\mathcal{K}$ slopes account for only $r$ of the eigenvalues of $\Ac$. By \cite[Lem.~2]{mestres2026dynamical} the remainder are eigenvalues of $A_0$ or roots of $c^\top(\lambda I-A_0)^{-1}BG^{-1}B^\top(A^\top)^{r-1}c=0$, and in either case
depend on the plant and on $K$. When $r=n$ the slopes account for all $n$ eigenvalues, and activating the filter fixes the entire vector field.

By Proposition~\ref{prop:Atilde}, the matrix $\Ac$ has no positive real eigenvalue, so the sufficient condition for unbounded trajectories in \cite[Prop.~3]{mestres2026dynamical} is never met under Assumption~\ref{as:reldeg}. That condition is sufficient and not necessary, so unbounded trajectories may still arise by another mechanism, and Section~\ref{sec:example} exhibits one.

Proposition~\ref{prop:Atilde} also supplies the first hypothesis of \cite[Thm.~2]{mestres2026dynamical} at no cost, since $\Ac$ is Hurwitz for every $K$ and every $G$, so that theorem simplifies to its second hypothesis as well. The example of \cite[Sec.~III.B]{marchese2025hocbf} has $r=n$, hence $\Ac$ Hurwitz, and is reported in \cite[Sec.~V]{mestres2026dynamical} not to be globally exponentially stable. With both modes Hurwitz whenever $K$ is stabilising, any such failure must come from their interaction, which is the behaviour attributed to persistent switching in \cite{marchese2025hocbf}.

\subsection{Safety by Design}

Although Proposition~\ref{prop:Atilde} makes the filtered mode Hurwitz,
stability of the complete closed loop still depends on its interaction
with the nominal mode. The first design removes this interaction
entirely by making the two modes coincide.

\begin{proposition}\label{prop:deactivate}
Suppose Assumptions~\ref{as:origin} and \ref{as:reldeg} hold, and let
$\Ld$ and $v_3$ be defined by \eqref{eq:lambda}--\eqref{eq:v2}. Set
$K^\star:=\Ld v_3^\top$ and $k(x):=-K^\star x$. Then the safety filter
\eqref{eq:filter} is inactive for every $x\in\R^n$, i.e.,
$u^*(x)=k(x)=-K^\star x$ for every $x\in\R^n$. Consequently,
\eqref{eq:closedloop} reduces to $\dot x=A_0x$ with $A_0=\Ac$ and
$\spec(A_0)=\{-\alpha_i\}_{i=1}^n$, the safe set $\mathcal C$ defined
in Section~\ref{subsec:filter-setup} is forward invariant, and the
origin is globally exponentially stable.
\end{proposition}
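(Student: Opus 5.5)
The plan is to show that with $K=K^\star$ the switching function satisfies $\eta(x)=\alpha d>0$ for all $x$. Then $R_+=\R^n$ and $R_-=\emptyset$, and every other claim follows from earlier results.

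First, compute $v_2^\top=v_3^\top-\Lambda K^\star=v_3^\top-\Lambda\Ld v_3^\top$. Since $\Lambda\Ld=1$ by construction in \eqref{eq:lambda}, this gives $v_2^\top=0_n^\top$. Hence $\eta(x)=v_2^\top x+\alpha d=\alpha d$. By Assumption~\ref{as:origin}, $d>0$, and $\alpha=\prod_i\alpha_i>0$, so $\eta(x)>0$ for every $x$. Therefore $R_+=\R^n$, and the closed form \eqref{eq:ustar} yields $u^*(x)=-K^\star x=k(x)$ everywhere.

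As a sanity check one can argue directly from the QP. Substituting $u=-K^\star x$ into \eqref{eq:hocbf-linear} gives
\[
c^\top\varphi(A)x-\Lambda\Ld c^\top\varphi(A)x+\alpha d=\alpha d>0 .
\]
So the nominal input is feasible, and being the unconstrained minimiser of \eqref{eq:filter}, it is the unique optimiser.

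Next, identify the nominal mode: $A_0=A-BK^\star=A-B\Ld c^\top\varphi(A)$. By the second expression in \eqref{eq:Atilde-form} of Proposition~\ref{prop:Atilde}, this equals $\Ac$. Proposition~\ref{prop:Atilde} then gives $\spec(A_0)=\{-\alpha_i\}_{i=1}^n$, so $A_0$ is Hurwitz. Consequently the closed loop \eqref{eq:closedloop} is the linear time-invariant system $\dot x=A_0x$ on all of $\R^n$, and its origin is globally exponentially stable.

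Finally, for safety, $u^*$ is linear and hence locally Lipschitz, and it satisfies \eqref{eq:forward-invariance} at every state. By the high-order CBF result of \cite{xiao2022hocbf} quoted in Section~\ref{subsec:filter-setup}, $\mathcal C$ is forward invariant.

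No step is a genuine obstacle. The only point needing care is the identity $\Lambda\Ld=1$, which relies on $\Lambda\neq0$ (guaranteed by relative degree $n$, so $\theta>0$). One must also note that $\eta>0$ strictly, not merely $\geq0$, so that the filter is inactive rather than only weakly active.
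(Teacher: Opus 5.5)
Your proposal is correct and follows the paper's proof: $\Lambda\Ld=1$ forces $v_2=0_n$, so $\eta\equiv\alpha d>0$ and $R_+=\R^n$, and Proposition~\ref{prop:Atilde} then supplies the spectrum. The only difference is cosmetic: the paper reads off $A_0=\Ac$ directly from $\Ac=A_0+v_1v_2^\top$ with $v_2=0_n$, whereas you invoke the explicit form $\Ac=A-B\Ld c^\top\varphi(A)$; your QP-level check and the explicit forward-invariance and GES remarks are correct additions the paper leaves implicit.
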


\begin{proof}
By \eqref{eq:v2} and $\Lambda\Ld=1$, $v_2^\top=v_3^\top-\Lambda\Ld v_3^\top=0_n^\top$, so $\eta\equiv\alpha d$, which is positive by Assumption~\ref{as:origin}. Then $R_+=\R^n$ and $\Ac=A_0$, and Proposition~\ref{prop:Atilde} gives the spectrum.
\end{proof}

By \eqref{eq:Atilde-form}, $\Ac=A-BK^\star$, so $K^\star$ is also the gain the filter applies whenever it is active: the two modes of \eqref{eq:closedloop} are $A-BK$ and $A-BK^\star$, and the switching studied in \cite{marchese2025hocbf} is between the designer's gain and the input-output linearising one. Choosing $K=K^\star$ removes the
switching, which is not available for $r<n$ because the zero dynamics are then left untouched. The closed-loop poles are the negatives of the class-$\mathcal K$ slopes. If $\alpha_i=\rho\bar\alpha_i$, then
\[
K^\star(\rho)=\rho^n\!\left(\prod_{i=1}^n\bar\alpha_i\right)\Ld c^\top
+O(\rho^{n-1}),
\]
so $\norm{K^\star(\rho)}=\Theta(\rho^n)$. Thus aggressive barrier
tuning can require aggressive nominal control. Section~\ref{sec:frequency}
quantifies how far one may move away from this choice.

\section{A Positive-Real Characterisation}\label{sec:frequency}

Since Proposition~\ref{prop:Atilde} fixes the filtered mode, we treat $\Ac$ as the reference matrix and the nominal mode
\begin{equation}\label{eq:A0-rank1}
  A_0=\Ac-v_1v_2^\top
\end{equation}
as a rank-one perturbation of it, and pose the common quadratic Lyapunov function question in the frequency domain.

\subsection{The Auxiliary Feedback System and its Return Difference}

By \eqref{eq:Atilde-rank1}, the plant matrix is $A=\Ac-v_1v_3^\top$, so both $A$ and every nominal mode $A_0=\Ac-v_1v_2^\top$ are rank-one perturbations of $\Ac$ along the same input direction $v_1$, differing only in the output direction. This makes
\begin{equation}\label{eq:sigma-g}
  \Sigma_g:\quad \dot z=\Ac z+v_1w,\qquad y=v_3^\top z,
\end{equation}
the natural auxiliary system, and we write
\begin{equation}\label{eq:gdef}
  g(s):=v_3^\top(sI_n-\Ac)^{-1}v_1,\qquad
  s\notin\spec(\Ac)
\end{equation}
for its transfer function. Neither $\Ac$ nor $v_3$ nor $v_1$ depends on $K$, so $g$ is fixed by the plant and the barrier alone. Closing the loop in Fig.~\ref{fig:interconnection}(b) with $w=-\lambda y$ gives the state matrix $\Ac-\lambda v_1v_3^\top$, which is $A$ at $\lambda=1$ and $\Ac$ at $\lambda=0$.

By Lemma~\ref{lem:circle}, the quantity that must be positive real is not the loop transfer function $\lambda g$ but the return difference $1+\lambda g$ of that loop. Since $g$ is scalar, the matrix determinant lemma \cite[Fact~2.16.3]{bernstein2009matrix} identifies the return difference with a ratio of characteristic polynomials: for any $\mu\in\R$, $1+\mu\,v_3^\top(sI_n-\Ac)^{-1}v_1=p_{\Ac-\mu v_1v_3^\top}(s)/p_{\Ac}(s)$. Taking $\mu=1$ and using $\Ac-v_1v_3^\top=A$ from \eqref{eq:Atilde-rank1} together with
$p_{\Ac}=\varphi$ from Proposition~\ref{prop:Atilde} gives the closed form
\begin{equation}\label{eq:gform}
  g(s)=\frac{p_A(s)}{\varphi(s)}-1
  =-\frac{\sum_{i=1}^{n}(\beta_i-a_i)s^{n-i}}{\varphi(s)},
\end{equation}
where $p_A(s)=s^n+\sum_{i=1}^{n}a_is^{n-i}$; the second equality holds because $p_A$ and $\varphi$ are monic polynomials of the same degree, and it shows that $g$ is strictly proper.

\begin{figure}[t]
  \centering
\tikzset{
  bd/.style={font=\footnotesize, >=stealth', line width=0.5pt},
  blk/.style={draw, line width=0.5pt, rectangle, align=center,
              inner xsep=4pt, inner ysep=3pt, minimum height=6.4mm},
  sum/.style={draw, line width=0.5pt, circle, minimum size=3.4mm, inner sep=0pt},
  dot/.style={circle, fill, minimum size=1.3mm, inner sep=0pt},
}

\begin{tikzpicture}[bd]
  \node[sum]                      (s) at (0,0) {};
  \node[blk, minimum width=9mm]   (K) at (1.30,0) {$-K$};
  \node[blk, minimum width=13mm]  (F) at (3.05,0) {safety\\filter};
  \node[blk, minimum width=24mm]  (P) at (5.65,0) {$\dot x = Ax + Bu$};
  \coordinate (br) at ($(P.east)+(0.45,0)$);
  \coordinate (en) at ($(P.east)+(1.30,0)$);
  \coordinate (fb) at ($(br)+(0,-1.25)$);

  \draw[->] (s)   -- (K);
  \draw[->] (K)   -- node[above,inner sep=1.5pt] {$u_0$} (F);
  \draw[->] (F)   -- node[above,inner sep=1.5pt] {$u^*$} (P);
  \draw     (P.east) -- (br);
  \draw[->] (br)  -- (en);
  \node[above,inner sep=2pt] at ($(P.east)+(0.22,0)$) {$x$};
  \node[dot] at (br) {};

  \draw     (br) -- (fb);
  \draw     (fb) -- (s|-fb);
  \draw[->] (s|-fb) -- (s);
  \node[dot] at (F|-fb) {};
  \draw[->] (F|-fb) -- (F.south);
  \node[above,inner sep=2pt] at ($(F|-fb)!0.45!(s|-fb)$) {$x$};
  \node at ($(s.south)+(-0.20,-0.17)$) {$-$};

  \node[anchor=south west, inner sep=0pt] at ($(s.west)+(-0.15,0.72)$) {(a)};
\end{tikzpicture}

\vspace{2.5mm}

\begin{tikzpicture}[bd]
  \node[sum]                     (s) at (0,0) {};
  \node[blk, minimum width=12mm] (g) at (1.55,0) {$g(s)$};
  \coordinate (br) at ($(g.east)+(0.48,0)$);
  \coordinate (en) at ($(g.east)+(1.30,0)$);
  \coordinate (fb) at ($(br)+(0,-1.15)$);

  \draw[->] (s) -- node[above,inner sep=1.5pt] {$w$} (g);
  \draw     (g.east) -- (br);
  \draw[->] (br) -- (en);
  \node[above,inner sep=2pt] at ($(g.east)+(0.24,0)$) {$y$};
  \node[dot] at (br) {};

  \node[blk, minimum width=8mm] (lam) at ($(fb)!0.52!(s|-fb)$) {$\lambda$};
  \draw     (br) -- (br|-fb);
  \draw     (br|-fb) -- (lam.east);
  \draw     (lam.west) -- (s|-fb);
  \draw[->] (s|-fb) -- (s);
  \node at ($(s.south)+(-0.20,-0.17)$) {$-$};

  \node[anchor=south west, inner sep=0pt] at ($(s.west)+(-0.15,0.62)$) {(b)};
\end{tikzpicture}
\hfill
\begin{tikzpicture}[bd]
  \coordinate (in) at (0,0);
  \coordinate (br) at (0.55,0);
  \node[sum]  (s)  at (2.85,0) {};
  \coordinate (en) at ($(s.east)+(0.85,0)$);
  \coordinate (fb) at ($(br)+(0,-1.15)$);

  \draw[->] (in) -- (s);
  \node[above,inner sep=2pt] at (0.22,0) {$\varrho$};
  \node[dot] at (br) {};
  \draw[->] (s) -- (en);
  \node[above,inner sep=2pt] at ($(s.east)+(0.50,0)$) {$q$};

  \node[blk, minimum width=15mm] (lg) at ($(fb)!0.55!(s|-fb)$) {$-\lambda\,g(s)$};
  \draw     (br) -- (br|-fb);
  \draw[->] (br|-fb) -- (lg.west);
  \draw     (lg.east) -- (s|-fb);
  \draw[->] (s|-fb) -- (s);
  \node[left,inner sep=1.5pt]  at (s.north west) {$+$};
  \node[right,inner sep=1.5pt] at (s.south east) {$-$};

  \node[anchor=south west, inner sep=0pt] at ($(in)+(-0.10,0.62)$) {(c)};
\end{tikzpicture}
  \caption{(a) The safety filter in closed loop, following
  \cite[Fig.~1]{marchese2025hocbf}. (b) The auxiliary system \eqref{eq:sigma-g} under
  $w=-\lambda y$, with state matrix $\Ac-\lambda v_1v_3^\top$. (c) The return difference
  $1+\lambda g$.}
  \label{fig:interconnection}
\end{figure}

\begin{proposition}\label{prop:hform}
Suppose Assumptions~\ref{as:origin} and \ref{as:reldeg} hold. For fixed
$K\in\R^{m\times n}$ and $G\succ0$, let $A_0$, $\Ac$, $v_1$, and
$v_2$ be defined by \eqref{eq:v1}--\eqref{eq:closedloop}. Then the
loop transfer function and return difference, defined for
$s\notin\spec(\Ac)$ by
\[
g_K(s):=v_2^\top(sI_n-\Ac)^{-1}v_1,\qquad
h_K(s):=1+g_K(s),
\]
satisfy
\begin{equation}\label{eq:hform}
  h_K(s)=1+v_2^\top(sI_n-\Ac)^{-1}v_1
  =\frac{p_{A_0}(s)}{\varphi(s)} .
\end{equation}
Thus, the possible poles of $h_K$ are the points $-\alpha_i$,
and $g_K=h_K-1$ is strictly proper. Moreover, a realisation of $h_K$ is
\begin{equation}\label{eq:sigma-h}
  \Sigma_h:\quad \dot z=\Ac z+v_1\varrho,\qquad q=v_2^\top z+\varrho,
\end{equation}
that is $(A_h,B_h,C_h,D_h)=(\Ac,v_1,v_2^\top,1)$.
\end{proposition}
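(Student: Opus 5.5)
The plan is to derive \eqref{eq:hform} from the matrix determinant lemma, exactly as was done for $g$ in \eqref{eq:gform}, but now with the output direction $v_2$ in place of $v_3$.

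\textbf{Step 1: the determinant identity.} Fix $s\notin\spec(\Ac)$. Since $v_1$ and $v_2$ are vectors, the matrix determinant lemma \cite[Fact~2.16.3]{bernstein2009matrix} gives
\[
\det\!\left(sI_n-\Ac+v_1v_2^\top\right)=\det(sI_n-\Ac)\left(1+v_2^\top(sI_n-\Ac)^{-1}v_1\right).
\]

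\textbf{Step 2: identify the two determinants.} By \eqref{eq:A0-rank1}, $\Ac-v_1v_2^\top=A_0$, so the left-hand side equals $p_{A_0}(s)$. By Proposition~\ref{prop:Atilde}, $\Ac$ is similar to $\Gamma$, hence $p_{\Ac}=p_\Gamma=\varphi$. Dividing by $\varphi(s)$, which is nonzero off $\spec(\Ac)=\{-\alpha_i\}$, yields $h_K=p_{A_0}/\varphi$.

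\textbf{Step 3: poles and properness.} Any poles of the rational function $p_{A_0}/\varphi$ lie among the roots of $\varphi$, namely the points $-\alpha_i$. Some of these may cancel, which is why the statement says ``possible'' poles. Since $p_{A_0}$ and $\varphi$ are both monic of degree $n$,
\[
h_K(s)-1=\frac{p_{A_0}(s)-\varphi(s)}{\varphi(s)},
\]
and the numerator has degree at most $n-1$. Hence $g_K$ is strictly proper, by the same argument as the second equality in \eqref{eq:gform}.

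\textbf{Step 4: the realisation.} For $\Sigma_h$ with zero initial state, the transfer function from $\varrho$ to $q$ is
\[
D_h+C_h(sI_n-A_h)^{-1}B_h=1+v_2^\top(sI_n-\Ac)^{-1}v_1=h_K(s).
\]
So $(\Ac,v_1,v_2^\top,1)$ realises $h_K$. Minimality is not claimed here and is not needed.

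\textbf{Main obstacle.} No step is genuinely hard. The only point needing care is the sign bookkeeping in Step~1: \eqref{eq:closedloop} defines $\Ac=A_0+v_1v_2^\top$, so $sI_n-A_0=(sI_n-\Ac)+v_1v_2^\top$. This is what produces the ``$+$'' sign in $1+g_K$, rather than the $1-v_2^\top(sI_n-\Ac)^{-1}v_1$ one would get from the other convention.
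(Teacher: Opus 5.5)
Your proposal is correct and takes essentially the paper's route: the paper gives no separate proof, relying on the matrix determinant lemma invoked just before \eqref{eq:gform}, applied to the rank-one relation $A_0=\Ac-v_1v_2^\top$ together with $p_{\Ac}=\varphi$ from Proposition~\ref{prop:Atilde}. The pole, strict-properness and realisation claims then follow as you describe, and your sign bookkeeping in Step~1 is right.
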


The plant, the nominal gain and the weighting matrix enter \eqref{eq:hform} only through $p_{A_0}$, and the barrier design only through $\varphi$. The two descriptions agree on the family studied in Section~\ref{subsec:family}: there $v_2=\lambda v_3$, so
$g_K=\lambda g$ and $h_K=1+\lambda g$.

\begin{lemma}\label{lem:minimal-aux}
Suppose Assumption~\ref{as:reldeg} holds. For fixed
$K\in\R^{m\times n}$ and $G\succ0$, let $A_0$, $\Ac$, $v_1$, and
$v_2$ be defined by \eqref{eq:v1}--\eqref{eq:closedloop}. Then the
realisation $(\Ac,v_1,v_2^\top,1)$ of $h_K$ in \eqref{eq:sigma-h} is
minimal if and only if
\begin{equation}\label{eq:nondegenerate}
  p_{A_0}(-\alpha_i)\neq0 \qquad\text{for all }i\in[n],
\end{equation}
that is, if and only if none of the points $-\alpha_i$ is an eigenvalue
of $A_0$.
\end{lemma}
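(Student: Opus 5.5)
Since $(\Ac,v_1,v_2^\top,1)$ is an $n$-dimensional realisation of $h_K$, it is minimal exactly when no pole--zero cancellation occurs in the reduced transfer function. Equivalently, $h_K$ has McMillan degree $n$. By Proposition~\ref{prop:hform}, $h_K=p_{A_0}/\varphi$, and both polynomials are monic of degree $n$. So the McMillan degree equals $n$ minus the degree of $\gcd(p_{A_0},\varphi)$. The plan is to show that this gcd is trivial if and only if \eqref{eq:nondegenerate} holds.

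The direction ``\eqref{eq:nondegenerate} $\Rightarrow$ coprime'' is immediate, because the roots of $\varphi$ are exactly the points $-\alpha_i$. The converse needs more care when the $\alpha_i$ are repeated. If some $p_{A_0}(-\alpha_i)=0$, then $s+\alpha_i$ divides both polynomials, so the gcd is nontrivial. Conversely, a nontrivial gcd has a root, and that root is some $-\alpha_i$ annihilating $p_{A_0}$. Hence coprimeness is equivalent to \eqref{eq:nondegenerate}, regardless of multiplicities.

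To keep the argument self-contained rather than appeal to the McMillan degree, I would instead verify controllability and observability directly.

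\textbf{Controllability of $(\Ac,v_1)$.} By Proposition~\ref{prop:Atilde}, $\Ac=\mathcal O^{-1}\Gamma\mathcal O$, and by \eqref{eq:Ov1}, $\mathcal O v_1=-e_n$. So the pair $(\Ac,v_1)$ is similar to $(\Gamma,-e_n)$. This is a companion pair in controllable canonical form, hence always controllable, independent of $K$.

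\textbf{Observability of $(\Ac,v_2^\top)$.} Controllability reduces minimality to this question. For a controllable SISO realisation, the numerator of $v_2^\top(sI-\Ac)^{-1}v_1$ written over the denominator $p_{\Ac}=\varphi$ is coprime to $\varphi$ exactly when the pair is observable. This follows from the PBH test: suppose $\Ac x=\lambda x$ with $v_2^\top x=0$, $x\neq0$. In controllable canonical coordinates the eigenvector has the form $x\propto(1,\lambda,\dots,\lambda^{n-1})$, and the output row gives the numerator evaluated at $\lambda$. Since $h_K=1+g_K=(\varphi+\text{num})/\varphi$ with $\varphi(\lambda)=0$, the numerator vanishing at $-\alpha_i$ is equivalent to $p_{A_0}(-\alpha_i)=0$. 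Combining gives: the realisation is minimal $\iff$ $(\Ac,v_2^\top)$ is observable $\iff$ \eqref{eq:nondegenerate}.

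The main obstacle is the PBH step when $\varphi$ has repeated roots. There $\Gamma$ is non-derogatory but not diagonalisable, so I must use that a companion matrix has exactly one eigenvector per distinct eigenvalue. Then PBH unobservability at $\lambda=-\alpha_i$ reduces to the single scalar condition $\text{num}(-\alpha_i)=0$. Equivalently, I would invoke the standard fact that a controllable SISO realisation is observable iff numerator and denominator are coprime. That fact sidesteps the Jordan-structure bookkeeping, and I would cite it at this point.
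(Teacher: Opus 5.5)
Your proposal is correct and follows essentially the same route as the paper: transform by $\mathcal O$ to obtain the controllable companion pair $(\Gamma,-e_n)$, then observe that observability fails exactly when the numerator $p_{A_0}-\varphi$ of $g_K$ shares a root with $\varphi$, and that this numerator equals $p_{A_0}(-\alpha_i)$ at each root $-\alpha_i$. Your PBH argument for repeated $\alpha_i$ (each eigenspace of the non-derogatory companion matrix $\Gamma$ is one-dimensional) spells out the standard coprimeness fact that the paper uses without comment.
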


\begin{proof}
In the coordinates $z\mapsto\mathcal{O}z$ the realisation becomes
$(\Gamma,-e_n,v_2^\top\mathcal{O}^{-1},1)$ by \eqref{eq:Ov1}, and
$(\Gamma,e_n)$ is controllable, being in controller canonical form. Observability fails exactly when the numerator $p_{A_0}-\varphi$ of $g_K$ shares a root with $\varphi$, and at a root $-\alpha_i$ of $\varphi$ that numerator equals $p_{A_0}(-\alpha_i)$.
\end{proof}

Condition \eqref{eq:nondegenerate} is generic and is enforced by the choice of the $\alpha_i$; we assume it throughout the remainder of the letter.

\subsection{The Equivalence}

We now connect the three viewpoints needed for the overall stability
analysis: a common quadratic Lyapunov certificate for the two modes,
positive realness of the scalar return difference, and very strict
passivity of its state-space realisation. The equivalence below is the
central result; the subsequent proposition makes the common storage
function explicit, and Theorem~\ref{thm:ges} transfers the certificate
to the state-dependent closed loop.

\begin{theorem}\label{thm:main}
Suppose Assumptions~\ref{as:origin} and \ref{as:reldeg} hold. For fixed
$K\in\R^{m\times n}$ and $G\succ0$, let $A_0$, $\Ac$, $v_1$, and
$v_2$ be defined by \eqref{eq:v1}--\eqref{eq:closedloop}. If
$A_0\neq\Ac$ and condition~\eqref{eq:nondegenerate} holds, then the
following statements are equivalent:
\begin{enumerate}
\item $\Ac$ and $A_0$ admit an SCQLF in the sense of
Definition~\ref{def:cqlf}.
\item The return difference $h_K$ in \eqref{eq:hform} is SSPR in the
sense of Definition~\ref{def:sspr}, equivalently,
      \begin{equation}\label{eq:sspr-cond}
        \Real\!\left\{\frac{p_{A_0}(j\omega)}{\varphi(j\omega)}\right\}>0
        \qquad\text{for all }\omega\in[-\infty,\infty].
      \end{equation}
\item There exists $P=P^\top\succ0$ such that
      \begin{equation}\label{eq:kyp-ours}
        \begin{bmatrix}
          \Ac^\top P+P\Ac & Pv_1-v_2\\[2pt]
          v_1^\top P-v_2^\top & -2
        \end{bmatrix}\prec0 .
      \end{equation}
\end{enumerate}
Whenever these equivalent conditions hold, $A_0$ is Hurwitz and the
system $\Sigma_h$ in \eqref{eq:sigma-h} is VSP.
\end{theorem}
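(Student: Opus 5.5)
The plan is to prove (i)$\Leftrightarrow$(ii) with Lemma~\ref{lem:circle}, (ii)$\Leftrightarrow$(iii) with the Kalman--Yakubovich--Popov lemma applied to the realisation $\Sigma_h$, and then to read off the final claims from Lemma~\ref{lem:hurwitz} and Lemma~\ref{lem:sspr-vsp}.

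\emph{Step 1: (i)$\Leftrightarrow$(ii).} By \eqref{eq:A0-rank1}, $A_0=\Ac-v_1v_2^\top$, which is the rank-one perturbation $A-\lambda bc^\top$ with $A:=\Ac$, $b:=v_1$, $c:=v_2$ and $\lambda=1$. Proposition~\ref{prop:Atilde} makes $\Ac$ Hurwitz. Lemma~\ref{lem:minimal-aux} together with \eqref{eq:nondegenerate} gives minimality of $(\Ac,v_1,v_2^\top)$, since the direct term does not affect minimality. The hypothesis $A_0\neq\Ac$ gives $v_2\neq0$; it is also needed so that $g_K\not\equiv0$, otherwise minimality would fail.

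Lemma~\ref{lem:circle} as stated presupposes that both matrices are Hurwitz (Definition~\ref{def:cqlf}), and in direction (ii)$\Rightarrow$(i) this is not yet known for $A_0$. I will supply it from Lemma~\ref{lem:hurwitz}, which gives $A_0$ Hurwitz once $1+\Real\,g_K(j\omega)>0$ holds. Lemma~\ref{lem:circle} then says that an SCQLF exists if and only if $\Real\{h_K(j\omega)\}>0$ for all finite $\omega$. By \eqref{eq:hform} this is $\Real\{p_{A_0}(j\omega)/\varphi(j\omega)\}>0$.

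To upgrade this to SSPR in the sense of Definition~\ref{def:sspr}, I will check three things. First, $h_K$ is analytic in $\Real s\ge0$, because its poles lie among the points $-\alpha_i<0$. Second, $h_K(\infty)=1$, since both polynomials are monic of degree $n$. Third, $\Real h_K(j\omega)$ is continuous on the compact set $[-\infty,\infty]$ and pointwise positive, so it is bounded below by some $\delta>0$. Conversely, SSPR trivially implies the pointwise inequality. This also shows that \eqref{eq:sspr-cond} is equivalent to SSPR.

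\emph{Step 2: (ii)$\Leftrightarrow$(iii).} I will apply the strict KYP lemma to $(\Ac,v_1,v_2^\top,1)$, with $\Ac$ Hurwitz. The frequency inequality
\[
h_K(j\omega)+h_K(j\omega)^*>0 \quad \text{for all } \omega\in[-\infty,\infty]
\]
is equivalent to the existence of $P=P^\top$ satisfying
\[
\begin{bmatrix}\Ac^\top P+P\Ac & Pv_1-v_2\\ v_1^\top P-v_2^\top & -(D_h+D_h^\top)\end{bmatrix}\prec0,
\]
and here $D_h+D_h^\top=2$, which is exactly \eqref{eq:kyp-ours}.

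Positivity of $P$ follows from the $(1,1)$ block: $\Ac^\top P+P\Ac\prec0$ with $\Ac$ Hurwitz forces $P\succ0$.

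The direction (iii)$\Rightarrow$(ii) can be done directly. Multiply the LMI on the left by $[((j\omega I-\Ac)^{-1}v_1)^*\ \ 1]$ and on the right by its conjugate transpose. The $P$-terms cancel, leaving $-2\Real h_K(j\omega)<0$. At $\omega=\pm\infty$ one has $h_K=1>0$. Uniformity in $\omega$ then follows by the same compactness argument as in Step 1.

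\emph{Step 3: remaining claims.} Once the equivalent conditions hold, $A_0$ is Hurwitz by Lemma~\ref{lem:hurwitz}, as noted in Step 1. The system $\Sigma_h$ is VSP by Lemma~\ref{lem:sspr-vsp}, using the minimality from Lemma~\ref{lem:minimal-aux}.

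\emph{Main obstacle.} I expect the delicate point to be the strict (non-minimal-free) form of the KYP lemma in Step 2. Its direction (ii)$\Rightarrow$(iii) requires strict inequality uniformly including $\omega=\infty$, which is why the SSPR upgrade in Step 1, with $D_h=1>0$, is needed. I would cite the strict positive-real lemma of \cite{brogliato2020dissipative} for this direction rather than rederive it. A secondary care point is invoking Lemma~\ref{lem:circle} only after $A_0$ is known to be Hurwitz.
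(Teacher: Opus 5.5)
Your proposal is correct and follows the same route as the paper. You use Lemma~\ref{lem:circle} with $b=v_1$, $c=v_2$, $\lambda=1$ and minimality from Lemma~\ref{lem:minimal-aux} for 1)$\Leftrightarrow$2), the KYP lemma on $(\Ac,v_1,v_2^\top,1)$ for 2)$\Leftrightarrow$3), and Lemmas~\ref{lem:hurwitz} and \ref{lem:sspr-vsp} for the final claims. Your additions (calling Lemma~\ref{lem:hurwitz} before Lemma~\ref{lem:circle}, the compactness argument on $[-\infty,\infty]$, the direct check of 3)$\Rightarrow$2), and getting $P\succ0$ from the $(1,1)$ block) only spell out steps the paper leaves implicit.
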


\begin{proof}
\emph{\text{1)} $\Leftrightarrow$ \text{2)}.} $\Ac$ is Hurwitz by Proposition~\ref{prop:Atilde}, $A_0-\Ac$ has rank one, and $(\Ac,v_1,v_2^\top)$ is minimal by Lemma~\ref{lem:minimal-aux}, so Lemma~\ref{lem:circle} with $b=v_1$, $c=v_2$ and $\lambda=1$ gives equivalence with $1+\Real\{g_K(j\omega)\}>0$, which is \eqref{eq:sspr-cond} by Proposition~\ref{prop:hform}. Since $h_K$ is analytic in $\Real\{s\}\geq0$ and tends to one as $s\to\infty$, the strict frequency-domain inequality is equivalent to SSPR.

\emph{\text{2)} $\Leftrightarrow$ \text{3)}.} Apply the Kalman--Yakubovich--Popov lemma \cite[Sec.~3.1.6]{brogliato2020dissipative} to the minimal realisation $(\Ac,v_1,v_2^\top,1)$, for which $D+D^\top=2$.

\emph{Remaining claims.} Lemma~\ref{lem:hurwitz} with $b=v_1$, $c=v_2$ gives $A_0$ Hurwitz. Minimality and Lemma~\ref{lem:sspr-vsp} give VSP.
\end{proof}

\begin{proposition}\label{prop:storage}
Suppose Assumptions~\ref{as:origin} and \ref{as:reldeg} hold. For fixed
$K\in\R^{m\times n}$ and $G\succ0$, let $\Ac$, $A_0$, $v_1$, and
$v_2$ be defined by \eqref{eq:v1}--\eqref{eq:closedloop}. Suppose
$P=P^\top\succ0$, $L\in\R^n$, $W\in\R$, and $\varepsilon>0$ satisfy
\begin{equation}\label{eq:kyp-ours-eq}
  \Ac^\top P+P\Ac=-LL^\top-\varepsilon P,\quad Pv_1=v_2-LW,\quad W^2=2.
\end{equation}
Then $V(z)=z^\top Pz$ is an SCQLF for $\Ac$ and $A_0$, with
\begin{equation}\label{eq:cqlf-explicit}
  A_0^\top P+PA_0=-\varepsilon P-\left(L-Wv_2\right)\left(L-Wv_2\right)^\top .
\end{equation}
\end{proposition}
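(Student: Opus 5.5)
The proof is a direct algebraic computation. I substitute the rank-one relation \eqref{eq:A0-rank1}, $A_0=\Ac-v_1v_2^\top$, into the Lyapunov expression for $A_0$ and then eliminate $Pv_1$ using the second equation of \eqref{eq:kyp-ours-eq}.

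First I expand
\[
A_0^\top P+PA_0=\Ac^\top P+P\Ac-v_2(Pv_1)^\top-(Pv_1)v_2^\top .
\]
Inserting $Pv_1=v_2-LW$ and the first equation of \eqref{eq:kyp-ours-eq}, the right-hand side becomes
\[
-LL^\top-\varepsilon P-2v_2v_2^\top+W\left(v_2L^\top+Lv_2^\top\right).
\]
Now I compare this with $-(L-Wv_2)(L-Wv_2)^\top$. That expansion equals $-LL^\top+W(Lv_2^\top+v_2L^\top)-W^2v_2v_2^\top$. Using $W^2=2$ (here $W$ is a scalar), the two expressions match term by term, which establishes \eqref{eq:cqlf-explicit}.

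For the SCQLF claim, the first equation of \eqref{eq:kyp-ours-eq} gives $\Ac^\top P+P\Ac=-LL^\top-\varepsilon P\preceq-\varepsilon P\prec0$, since $P\succ0$ and $\varepsilon>0$. Likewise \eqref{eq:cqlf-explicit} gives $A_0^\top P+PA_0\preceq-\varepsilon P\prec0$, because the subtracted outer product is positive semidefinite. The strict inequality therefore comes entirely from the $\varepsilon P$ term. Hence both matrices satisfy strict Lyapunov inequalities with the same $P$.

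Definition~\ref{def:cqlf} presupposes that $\Ac$ and $A_0$ are Hurwitz. $\Ac$ is Hurwitz by Proposition~\ref{prop:Atilde}. For $A_0$, the strict Lyapunov inequality with $P\succ0$ implies it directly by the standard Lyapunov argument, so no frequency-domain input is needed.

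The only delicate point is bookkeeping rather than a genuine obstacle. One has to keep the sign conventions of $A_0=\Ac-v_1v_2^\top$ and $Pv_1=v_2-LW$ consistent, so that the cross terms assemble into the perfect square with coefficient $W^2=2$. This is also where the feedthrough $D_h=1$ of $\Sigma_h$ (giving $D+D^\top=2$) enters.
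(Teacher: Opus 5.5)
Your proof is correct and follows the paper's own argument: substitute $A_0=\Ac-v_1v_2^\top$, eliminate $Pv_1$ using $Pv_1=v_2-LW$, and complete the square using $W^2=2$. You also note that the strict Lyapunov inequality makes $A_0$ Hurwitz, as Definition~\ref{def:cqlf} presupposes; the paper leaves this point implicit.
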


\begin{proof}
From \eqref{eq:A0-rank1}, $A_0^\top P+PA_0=\Ac^\top P+P\Ac-(v_2v_1^\top P+Pv_1v_2^\top)$. The second equation in \eqref{eq:kyp-ours-eq} gives $Pv_1v_2^\top=v_2v_2^\top-LWv_2^\top$, and adding this to its transpose gives $v_2v_1^\top P+Pv_1v_2^\top=2v_2v_2^\top-LWv_2^\top-v_2WL^\top$. Substituting the first equation in \eqref{eq:kyp-ours-eq} and using $W^2=2$,
\begin{align*}
  A_0^\top P+PA_0
  &= -LL^\top-\varepsilon P-2v_2v_2^\top+W\!\left(Lv_2^\top+v_2L^\top\right)\\
  &= -\varepsilon P-\left(L-Wv_2\right)\left(L-Wv_2\right)^\top\prec0 .
\end{align*}
The inequality $\Ac^\top P+P\Ac=-LL^\top-\varepsilon P\prec0$ is immediate.
\end{proof}

The quadratic storage function certifying very strict passivity of $\Sigma_h$ is therefore, without modification, the common Lyapunov function required by \cite[Thm.~2]{mestres2026dynamical}: with $V(z)=z^\top Pz$, the certified dissipation inequality along \eqref{eq:sigma-h} is $\dot V\leq2\varrho q-\varepsilon V-\norm{L^\top z}^2$, whose margins are the input and output passivity indices of \eqref{eq:vsp}.

\begin{theorem}\label{thm:ges}
Suppose Assumptions~\ref{as:origin} and \ref{as:reldeg} hold. For fixed
$K\in\R^{m\times n}$ and $G\succ0$, let $A_0$ and $\Ac$ be defined by
\eqref{eq:closedloop}. If condition~\eqref{eq:nondegenerate} holds and
$p_{A_0}/\varphi$ is SSPR, then the origin of \eqref{eq:closedloop} is
globally exponentially stable.
\end{theorem}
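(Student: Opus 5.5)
Let $V(x)=x^\top Px$ with $P$ a common quadratic Lyapunov matrix, and show it decays exponentially along every Carathéodory solution of the piecewise affine system \eqref{eq:closedloop}. The affine term $\bc$ in the filtered mode is the only complication. We first dispose of the degenerate case $A_0=\Ac$ (e.g.\ $K=K^\star$): then $v_2=0$ (since $v_1\neq0$), so $\eta\equiv\alpha d>0$ and $R_-=\emptyset$. The closed loop is $\dot x=A_0x$, and $A_0=\Ac$ is Hurwitz by Proposition~\ref{prop:Atilde}, giving global exponential stability. Otherwise $A_0\neq\Ac$, and Theorem~\ref{thm:main} together with Proposition~\ref{prop:storage} provides $P\succ0$ and $\varepsilon>0$ with $\Ac^\top P+P\Ac\preceq-\varepsilon P$ and $A_0^\top P+PA_0\preceq-\varepsilon P$.

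Next, rewrite the filtered vector field so that it exhibits the same rank-one structure used in the frequency-domain analysis. On $R_-$, \eqref{eq:closedloop} gives
\[
\Ac x+\bc=A_0x+v_1v_2^\top x+\alpha d v_1=A_0x+v_1\eta(x).
\]
Hence the closed loop is the single Lur'e-type system
\[
\dot x=A_0x+v_1\,\sigma(\eta(x)),\qquad \sigma(\eta):=\min\{\eta,0\}.
\]
Equivalently, $\dot x=\Ac x+v_1\,(\sigma(\eta)-\eta)+\alpha d v_1$. The sector nonlinearity $\sigma$ lies in $[0,1]$, and for each $x$ we can write $\sigma(\eta(x))=\kappa(x)\eta(x)$ with $\kappa\in\{0,1\}$. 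The difficulty is that $\eta$ is affine rather than linear, so the system is not simply a convex combination of $A_0x$ and $\Ac x$.

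Then compute $\dot V$ directly on each region. On $R_+$ we have $\dot V=x^\top(A_0^\top P+PA_0)x\le-\varepsilon V$. On $R_-$,
\[
\dot V=x^\top(A_0^\top P+PA_0)x+2\eta(x)\,x^\top Pv_1 .
\]
Using $Pv_1=v_2-LW$ from \eqref{eq:kyp-ours-eq} and $v_2^\top x=\eta-\alpha d$, this becomes
\[
\dot V=x^\top(A_0^\top P+PA_0)x+2\eta^2-2\alpha d\,\eta-2W\eta\,L^\top x .
\]
Substituting \eqref{eq:cqlf-explicit}, the terms $-(L^\top x-Wv_2^\top x)^2+2\eta^2-2W\eta L^\top x$ should complete into a negative square. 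Writing $v_2^\top x=\eta-\alpha d$ and using $W^2=2$, I expect
\[
\dot V=-\varepsilon V-\bigl(L^\top x-W v_2^\top x-W\eta\bigr)^2+\text{(terms involving }\alpha d\,\eta\text{)} .
\]
The key sign fact is that on $R_-$ we have $\eta<0$ and $\alpha d>0$ by Assumption~\ref{as:origin}, so $-2\alpha d\,\eta>0$. This term has the wrong sign, and that is exactly where I expect the main obstacle.

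If the direct computation does not close, I would instead compute $\dot V$ using the $\Ac$ form, $\dot x=\Ac x+\alpha d v_1$, so that $\dot V\le-\varepsilon V+2\alpha d\,x^\top Pv_1$. Then I would exploit the geometry of $R_-$: there $v_2^\top x<-\alpha d$, which in particular bounds $\alpha d$ by $|v_2^\top x|$. This would show that the affine cross term is dominated by the quadratic dissipation $-(L-Wv_2)^\top x$ squared plus $-\varepsilon V$, possibly at the cost of a smaller decay rate $\varepsilon'$.

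If such pointwise domination fails, the fallback is to invoke \cite[Thm.~2]{mestres2026dynamical}. Its hypotheses reduce, as noted after Proposition~\ref{prop:Atilde}, to $\Ac$ Hurwitz plus the existence of a common quadratic Lyapunov function, and Theorem~\ref{thm:main} supplies the latter. Either way, the final step is to conclude $V(x(t))\le e^{-\varepsilon' t}V(x(0))$. This holds because $V$ is $C^1$ and the closed loop is locally Lipschitz, so solutions are unique and the comparison lemma applies across switches; global exponential stability then follows from $\lambda_{\min}(P)\norm{x}^2\le V\le\lambda_{\max}(P)\norm{x}^2$.
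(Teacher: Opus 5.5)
Your conclusion holds, but only through the fallback. Theorem~\ref{thm:main} gives an SCQLF and shows $A_0$ is Hurwitz. Lemma~\ref{lem:king} turns the SCQLF into ``$A_0\Ac$ has no negative real eigenvalue''. With $\Ac$ Hurwitz, that is what \cite[Thm.~2]{mestres2026dynamical} requires, as noted after Proposition~\ref{prop:Atilde}. Your degenerate case is vacuous: $A_0=\Ac$ forces $p_{A_0}=\varphi$, which contradicts \eqref{eq:nondegenerate}.

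The obstacle that sent you to the fallback does not exist; it is an algebra slip. In $-2W\eta\,L^\top x$ you must also split $L^\top x=(L-Wv_2)^\top x+Wv_2^\top x$. The second piece gives $-2W^2\eta\,v_2^\top x=-4\eta(\eta-\alpha d)$, an extra $+4\alpha d\,\eta$. So the net coefficient of $\alpha d\,\eta$ is $+2$, not $-2$, and the square completes with $+W\eta$ rather than $-W\eta$. Exactly, on $R_-$,
\[
\dot V=-\varepsilon V-\bigl((L-Wv_2)^\top x+W\eta\bigr)^2+2\alpha d\,\eta
=-\varepsilon V-\bigl(L^\top x+\alpha d\,W\bigr)^2+2\alpha d\,\eta ,
\]
and $\eta<0<\alpha d$ there. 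Hence $\dot V\le-\varepsilon V$, with no loss of rate.

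Your second route, by contrast, cannot close as written. Once $-LL^\top$ is discarded, the bound $-\varepsilon V+2\alpha d\,x^\top Pv_1$ can be positive just inside $R_-$. For instance, take $n=1$, $c=B=G=1$, $\alpha_1=1$, $A_0=-0.1$. Every admissible rate has $\varepsilon\le0.2$, yet the bound equals $Pd^2(2/0.9-\varepsilon/0.81)>0$ at the switching point $x=-d/0.9$. No domination argument can rescue it.

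Your remark that the loop ``is not simply a convex combination'' is also mistaken. Since $\eta(0)=\alpha d>0$, for $x\in R_-$ the number $s:=-\alpha d/(v_2^\top x)$ lies in $(0,1)$, and $\Ac x+\bc=sA_0x+(1-s)\Ac x$. So $\dot V\le-\varepsilon V$ holds for \emph{any} common $P$, without the KYP structure.

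This is the non-incremental special case of the paper's argument. The paper notes that \eqref{eq:closedloop} is a continuous piecewise affine system and invokes \cite[Thm.~1]{pavlov2005convergent}, whose hypotheses are just the two Lyapunov inequalities. That yields exponential convergence of all solutions to one another; it then uses that $x\equiv0_n$ is a solution by Assumption~\ref{as:origin}.

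What each route buys:
\begin{itemize}
\item The paper's route gives the stronger incremental property from an arbitrary common $P$.
\item Your corrected direct computation is self-contained. Through Proposition~\ref{prop:storage}, it also exhibits the extra dissipation $2\alpha d\,\eta$ supplied by the active filter.
\item Your fallback simply swaps Pavlov's theorem for a different external result.
\end{itemize}
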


\begin{proof}
Theorem~\ref{thm:main} supplies $P\succ0$ that is an SCQLF for $\Ac$ and $A_0$ and shows that $A_0$ is Hurwitz. Since $u^*$ is locally Lipschitz, \eqref{eq:closedloop} is a continuous piecewise affine system of the class in \cite[Sec.~III]{pavlov2005convergent} and the two Lyapunov inequalities are the hypotheses of \cite[Thm.~1]{pavlov2005convergent}, giving exponential convergence. By Assumption~\ref{as:origin}, $x(t)\equiv0_n$ is a solution.
\end{proof}

\begin{remark}\label{rem:converse}
Theorem~\ref{thm:main} is an equivalence about \emph{quadratic} certificates. If \eqref{eq:sspr-cond} fails, then no SCQLF exists and, by Lemma~\ref{lem:king}, $A_0\Ac$ has a negative real eigenvalue, so \cite[Thm.~2]{mestres2026dynamical} does not apply. Global exponential stability may nonetheless hold, since the switching in \eqref{eq:closedloop} is state-dependent and confined to $R_\pm$ rather than arbitrary.
\end{remark}

\subsection{A Family of Nominal Gains}\label{subsec:family}

The equivalence above tests any prescribed nominal controller. For
design, it is useful to restrict attention to a one-parameter family
that moves continuously from the inactive-filter controller toward the
open-loop plant and converts the SSPR condition into an explicit scalar
interval.

For design, move along the line through the gain of Proposition~\ref{prop:deactivate}:
\begin{equation}\label{eq:Klambda}
  K(\lambda):=(1-\lambda)\Ld v_3^\top=(1-\lambda)K^\star,\qquad\lambda\in\R .
\end{equation}
Then $v_2^\top=\lambda v_3^\top$ by $\Lambda\Ld=1$, so $\lambda$ is exactly the feedback gain in Fig.~\ref{fig:interconnection}(b) and
\begin{equation}\label{eq:A0-lambda}
  A_0(\lambda)=\Ac-\lambda v_1v_3^\top,
  \quad
  h_\lambda(s)=\frac{(1-\lambda)\varphi(s)+\lambda p_A(s)}{\varphi(s)},
\end{equation}
the second expression following from $h_\lambda=1+\lambda g$ and \eqref{eq:gform}. The family interpolates between the input-output linearising gain at $\lambda=0$, where Proposition~\ref{prop:deactivate} applies, and $K(1)=0$ with $A_0(1)=A$. Multiplying \eqref{eq:gform} by $\varphi(-j\omega)$ gives $\Real\{g(j\omega)\}=N(\omega)/D(\omega)$, where
\begin{equation}\label{eq:PQ}
\begin{gathered}
  N(\omega)=-\!\sum_{m=1}^{n}(-1)^{n-m}\mathfrak{c}_{2m}\,\omega^{2(n-m)},\\
  D(\omega)=\prod\nolimits_{i=1}^{n}\left(\alpha_i^2+\omega^2\right),
\end{gathered}
\end{equation}
where $\mathfrak{c}=\tilde\beta\circledast b$ with $b_i:=\beta_i-a_i$, $b_0=0$, and $\tilde\beta_k:=(-1)^{n-k}\beta_k$. The polynomials $N$ and $D$ have degrees at most $n-1$ and exactly $n$ in $\omega^2$, respectively, so $\Real\{g(j\omega)\}\to0$ as $\omega\to\infty$. Writing $(\cdot)'$ for $\mathrm{d}/\mathrm{d}\omega$, the finite stationary frequencies are the real roots of $N'D-ND'=0$. Hence
\begin{equation}\label{eq:Rbar}
  \bar R:=\sup_{\omega\in\R}\Real\{g(j\omega)\}\geq0,
  \qquad
  \underline R:=\inf_{\omega\in\R}\Real\{g(j\omega)\}\leq0
\end{equation}
are obtained from the values at $\omega=0$, the real roots of
$N'D-ND'=0$, and the limiting value $0$ as $\omega\to\infty$.

\begin{corollary}\label{cor:interval}
Suppose Assumptions~\ref{as:origin} and \ref{as:reldeg} hold and
$p_A(-\alpha_i)\neq0$ for every $i\in[n]$. Let $\bar R$ and
$\underline R$ be defined by \eqref{eq:Rbar}. With the conventions
$-1/\bar R=-\infty$ if $\bar R=0$ and
$-1/\underline R=+\infty$ if $\underline R=0$, the gain
\eqref{eq:Klambda} renders the origin globally exponentially stable
for \eqref{eq:closedloop} whenever
\begin{equation}\label{eq:interval}
  -\frac{1}{\bar R}<\lambda<-\frac{1}{\underline R}.
\end{equation}
For every $\lambda\neq0$ outside this interval, $A_0(\lambda)$ and
$\Ac$ admit no SCQLF.
\end{corollary}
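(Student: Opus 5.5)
The plan is to reduce everything to Theorems~\ref{thm:main} and \ref{thm:ges} applied to the one-parameter family \eqref{eq:Klambda}. There, $v_2=\lambda v_3$, so $g_K=\lambda g$ and $h_\lambda=1+\lambda g$ by \eqref{eq:A0-lambda}. The SSPR condition \eqref{eq:sspr-cond} therefore becomes the scalar inequality $1+\lambda\Real\{g(j\omega)\}>0$ for all $\omega\in[-\infty,\infty]$. At $\omega=\pm\infty$ this holds automatically, because $g$ is strictly proper by \eqref{eq:gform}.

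\emph{Step 1: nondegeneracy.} To invoke the theorems we need \eqref{eq:nondegenerate} for $A_0(\lambda)$ with $\lambda\neq0$. By \eqref{eq:A0-lambda}, $p_{A_0(\lambda)}=(1-\lambda)\varphi+\lambda p_A$. At a root $-\alpha_i$ of $\varphi$ this equals $\lambda p_A(-\alpha_i)$, which is nonzero by hypothesis. Equivalently, $(\Ac,v_1,v_3^\top)$ is minimal, and scaling the output by $\lambda\neq0$ preserves minimality. We also have $A_0(\lambda)\neq\Ac$ for $\lambda\neq0$, provided $v_1v_3^\top\neq0$. This holds because $v_3=0$ would force $g\equiv0$, i.e.\ $p_A=\varphi$, contradicting $p_A(-\alpha_i)\neq0$. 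The case $\lambda=0$ is handled directly by Proposition~\ref{prop:deactivate}, and $0$ always lies in \eqref{eq:interval} since $\bar R\ge0\ge\underline R$.

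\emph{Step 2: the interval.} We show that, for $\lambda\neq0$, the inequality $1+\lambda\Real\{g(j\omega)\}>0$ for all $\omega$ is equivalent to \eqref{eq:interval}. The key quantitative fact is that $\Real\{g(j\omega)\}$ is continuous on $\R$ and tends to $0$ at infinity, so its supremum $\bar R$ and infimum $\underline R$ are finite. Each is either attained or equal to the limiting value $0$.

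\begin{itemize}
\item For $\lambda>0$ the condition reads $\Real\{g\}>-1/\lambda$ for all $\omega$. If $\underline R<0$ is attained, this is equivalent to $\underline R>-1/\lambda$, i.e.\ $\lambda<-1/\underline R$. If $\underline R=0$, the condition is $\Real\{g\}\ge0>-1/\lambda$, which always holds, matching the convention $-1/\underline R=+\infty$.
\item For $\lambda<0$, symmetrically, the condition is $\Real\{g\}<1/|\lambda|$ for all $\omega$, which is equivalent to $\lambda>-1/\bar R$, with the convention $-1/\bar R=-\infty$ when $\bar R=0$.
\end{itemize}

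The step I expect to need the most care is attainment. The infimum is either $0$ (approached at infinity) or negative. In the negative case, continuity together with decay at infinity forces it to be attained on a compact set. This guarantees that a non-strict bound on the extremum translates into a strict pointwise inequality, and conversely that the endpoints $\lambda=-1/\underline R$ and $\lambda=-1/\bar R$ genuinely fail the strict inequality.

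\emph{Step 3: conclusions.} For $\lambda$ in \eqref{eq:interval} with $\lambda\neq0$, Steps 1 and 2 give that $h_\lambda=p_{A_0(\lambda)}/\varphi$ is SSPR. Theorem~\ref{thm:ges} then yields global exponential stability, with $A_0(\lambda)$ Hurwitz by Theorem~\ref{thm:main}. For $\lambda\neq0$ outside the interval, Step 2 shows that \eqref{eq:sspr-cond} fails at some finite $\omega$. If $A_0(\lambda)$ is not Hurwitz, no SCQLF exists by Definition~\ref{def:cqlf}. Otherwise, the equivalence 1)$\Leftrightarrow$2) of Theorem~\ref{thm:main}, or directly Lemma~\ref{lem:circle}, whose hypotheses hold by Step 1, shows that $A_0(\lambda)$ and $\Ac$ admit no SCQLF.
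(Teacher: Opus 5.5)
Your proposal is correct and follows the paper's proof. You reduce to $1+\lambda\Real\{g(j\omega)\}>0$ via $h_\lambda=1+\lambda g$, split by the sign of $\lambda$, get nondegeneracy from $p_{A_0(\lambda)}(-\alpha_i)=\lambda p_A(-\alpha_i)$, invoke Theorems~\ref{thm:main} and \ref{thm:ges}, and handle $\lambda=0$ by Proposition~\ref{prop:deactivate}; your added points on attainment of $\underline R$ and $\bar R$ (left implicit in the paper but needed for the endpoints to fail) and on $A_0(\lambda)\neq\Ac$ (which already follows from \eqref{eq:nondegenerate}) are sound refinements rather than a different route.
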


\begin{proof}
By \eqref{eq:A0-lambda}, condition~\eqref{eq:sspr-cond} becomes
$1+\lambda\Real\{g(j\omega)\}>0$ for all $\omega$. For $\lambda>0$
this is equivalent to $\lambda\underline R>-1$, and for $\lambda<0$
to $\lambda\bar R>-1$, which gives \eqref{eq:interval}. For
$\lambda\neq0$, the assumption $p_A(-\alpha_i)\neq0$ implies
$p_{A_0(\lambda)}(-\alpha_i)\neq0$, so Theorems~\ref{thm:main} and
\ref{thm:ges} give the claims. At $\lambda=0$, the stability conclusion
follows from Proposition~\ref{prop:deactivate}.
\end{proof}

\subsection{Feasibility of the Design Inequalities}

The positive-real characterisation determines whether a prescribed
gain admits an SCQLF, whereas \cite[Lem.~3]{mestres2026dynamical}
searches for one gain through two coupled LMIs. To relate these
approaches and determine when that synthesis can succeed at all, we
eliminate the controller variable and reduce feasibility to conditions
on the Lyapunov variable alone.

\begin{equation}\label{eq:hatAB}
\hat A:=A+v_1v_3^\top,\qquad
\hat B:=(I_n+v_1c^\top A^{n-1})B .
\end{equation}
Then $\hat A-\hat BK=\Ac$ for every $K$ \cite[Lem.~3]{mestres2026dynamical}. The design inequalities of \cite[Lem.~3]{mestres2026dynamical} in $Q=Q^\top\succ0$ and $Y\in\R^{m\times n}$ are
\begin{subequations}\label{eq:lmi}
\begin{align}
  AQ+QA^\top+BY+Y^\top B^\top &\prec0,\label{eq:lmi-a}\\
  \hat AQ+Q\hat A^\top+\hat BY+Y^\top\hat B^\top &\prec0,\label{eq:lmi-b}
\end{align}
\end{subequations}
and feasibility yields $P=Q^{-1}$, $K=-YQ^{-1}$ with $x^\top Px$ an SCQLF for $A_0$ and $\Ac$.

\begin{proposition}\label{prop:lmi}
Suppose Assumption~\ref{as:reldeg} holds, and let $\Ac$, $\hat A$, and
$\hat B$ be defined by \eqref{eq:closedloop} and \eqref{eq:hatAB}.
Then the pair \eqref{eq:lmi} is feasible if and only if
there exists a $Q=Q^\top\succ0$ such that
\begin{equation}\label{eq:lmi-b-reduced}
  \Ac Q+Q\Ac^\top\prec0
  \quad\text{and}\quad
  B_\perp^\top\!\left(AQ+QA^\top\right)B_\perp\prec0,
\end{equation}
where the columns of $B_\perp$ span $\ker(B^\top)$.
\end{proposition}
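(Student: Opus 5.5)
The plan is to show that, under Assumption~\ref{as:reldeg}, the second inequality \eqref{eq:lmi-b} loses all dependence on $Y$ and becomes exactly the first condition in \eqref{eq:lmi-b-reduced}. Once that holds, the only remaining coupling is \eqref{eq:lmi-a}, and the variable $Y$ can be eliminated from it by Finsler's lemma.

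\emph{Step 1: the data $\hat A,\hat B$ collapse.} By \eqref{eq:Atilde-rank1}, $\hat A=A+v_1v_3^\top=\Ac$. For $\hat B$, use $v_1=-B\Ld$ and $c^\top A^{n-1}B=\Lambda$ to write
\[
\hat B=B+v_1\Lambda=B-B\Ld\Lambda=B\left(I_m-\Ld\Lambda\right)=0,
\]
where the last equality is \eqref{eq:Bproj}. This is consistent with the identity $\hat A-\hat BK=\Ac$ for every $K$. Hence \eqref{eq:lmi-b} reads $\Ac Q+Q\Ac^\top\prec0$ for every $Y$, which is precisely the first condition of \eqref{eq:lmi-b-reduced}.

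\emph{Step 2: eliminate $Y$ from \eqref{eq:lmi-a}.} Fix $Q\succ0$ and set $M:=AQ+QA^\top$. I will show that there exists $Y$ with $M+BY+Y^\top B^\top\prec0$ if and only if $B_\perp^\top MB_\perp\prec0$.

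For necessity, congruence with $B_\perp$ annihilates the $Y$-terms, since $B_\perp^\top B=0$.

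For sufficiency, choose $Y=-\tau B^\top$ with $\tau>0$, so that the inequality becomes $M-2\tau BB^\top\prec0$. Finsler's lemma then gives a $\tau$ large enough for this to hold. Concretely, decompose $x=B_\perp a+Bb$, which is possible because the range of $B$ and $\ker(B^\top)$ are orthogonal complements. Negativity on $\ker(B^\top)$ combined with $BB^\top\succ0$ on the range of $B$ yields the bound by a standard Schur/Young-type estimate on the cross terms.

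The argument does not depend on the number of columns of $B_\perp$, which here is $n-\rank(B)=n-1$ by Lemma~\ref{lem:minimal}. In the degenerate case $n=1$ the condition is vacuous, and indeed then $B$ has full row rank and \eqref{eq:lmi-a} is always solvable.

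\emph{Step 3: combine.} Feasibility of \eqref{eq:lmi} for some $(Q,Y)$ means a single $Q$ satisfies both steps, giving \eqref{eq:lmi-b-reduced}. Conversely, given $Q$ satisfying \eqref{eq:lmi-b-reduced}, take $Y=-\tau B^\top$ with $\tau$ large. Then \eqref{eq:lmi-a} holds by Step~2, and \eqref{eq:lmi-b} holds by Step~1 regardless of $Y$.

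The main point to get right is Step~1, the identity $\hat B=0$, since it is what removes the coupling through $Y$. The only genuinely analytic ingredient is the Finsler elimination in Step~2, which I would cite directly rather than reprove.
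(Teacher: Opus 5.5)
Your proof is correct and follows the paper's structure: the $Y$-terms drop out of \eqref{eq:lmi-b}, leaving $\Ac Q+Q\Ac^\top\prec0$, and $Y$ is then eliminated from \eqref{eq:lmi-a} through the condition on $\ker(B^\top)$. Two small differences from the paper are worth noting. Your identity $\hat B=B(I_m-\Ld\Lambda)=0$ is a sharper, explicit form of the paper's observation that $\hat A-\hat BK=\Ac$ is independent of $K$. Your Finsler argument with $Y=-\tau B^\top$ is the $V=I_n$ case of the elimination lemma the paper cites, and it has the added benefit of producing an explicit gain $K=\tau B^\top Q^{-1}$.
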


\begin{proof}
Fix $Q\succ0$. Since $Q$ is invertible, every $Y\in\R^{m\times n}$ can be written uniquely as $Y=-KQ$ with $K:=-YQ^{-1}$. Substituting this into the left-hand side of \eqref{eq:lmi-b} and collecting terms gives
$(\hat A-\hat BK)Q+Q(\hat A-\hat BK)^\top=\Ac Q+Q\Ac^\top$, which by Proposition~\ref{prop:Atilde} is independent of $K$ and hence of $Y$. For a given $Q$, therefore, \eqref{eq:lmi-b} holds either for every $Y$ or for none, according to whether $\Ac Q+Q\Ac^\top\prec0$.

Inequality \eqref{eq:lmi-a} is, for a given $Q$, a LMI in $Y$
alone. Write it as $\Psi+U^\top YV+V^\top Y^\top U\prec0$ with $\Psi:=AQ+QA^\top$, $U:=B^\top$ and $V:=I_n$. By the elimination lemma \cite[Sec.~2.6]{boyd1994lmi}, such a $Y$ exists if and only if $W_U^\top\Psi W_U\prec0$ and $W_{V}^\top\Psi W_{V}\prec0$, where the columns of $W_U$ and $W_{V}$ are bases of $\ker(U)$ and $\ker(V)$. Here, $\ker(V)=\{0_n\}$, so the second condition is empty, while $\ker(U)=\ker(B^\top)$, so the first is the second inequality in \eqref{eq:lmi-b-reduced}.

If $(Q,Y)$ satisfies \eqref{eq:lmi}, then $Q$ satisfies both inequalities in \eqref{eq:lmi-b-reduced} by the two paragraphs above. Conversely, if $Q\succ0$ satisfies both, then \eqref{eq:lmi-b} holds for every $Y$, and the elimination lemma shows the existence of a $Y$ satisfying \eqref{eq:lmi-a}.
\end{proof}

Condition \eqref{eq:lmi-b-reduced} involves $Q$ alone, so it can be tested without searching over the gain, and it settles whether the construction of \cite[Lem.~3]{mestres2026dynamical} can produce a nominal controller at all. The set of $Q$ satisfying its first inequality is a nonempty open convex cone, since $\Ac$ is Hurwitz. Further, the change of variables $\bar Q:=\mathcal{O}Q\mathcal{O}^\top$ turns it into $\Gamma\bar Q+\bar Q\Gamma^\top\prec0$, which depends on $\{\alpha_i\}$ alone, and can therefore be characterised once and reused across plants sharing $\mathcal{O}$.

Theorem~\ref{thm:main} tests any $K$, so the admissible set is
\begin{equation}\label{eq:admissible-set}
  \mathcal{K}_\star=\left\{K\in\R^{m\times n}:
  \begin{array}{l}
  p_{A-BK}(-\alpha_i)\neq0\quad\forall i\in[n],\\[-1pt]
  p_{A-BK}/\varphi\ \text{is SSPR}
  \end{array}\right\},
\end{equation}
of which \eqref{eq:interval} describes the intersection with a line and \eqref{eq:lmi} returns a single arbitrary point in the set.

\section{The Flexible Two-Mass Example}\label{sec:example}

We apply the results to \cite[Sec.~III.B]{marchese2025hocbf}, which
\cite[Sec.~V]{mestres2026dynamical} singles out as a case where $\Ac$ is Hurwitz while $A_0\Ac$ has negative real eigenvalues, so that \cite[Thm.~2]{mestres2026dynamical} does not apply.

\subsection{System Data}

Two point masses move in one dimension, coupled by an ideal spring of stiffness $k$ and zero rest length; the force acts on the second mass and the constrained output is the position of the first. With $x=[p_1,p_2,\dot p_1,\dot p_2]^\top$,
\begin{equation}\label{eq:ex-data}
  A=\begin{bmatrix}
    0&0&1&0\\ 0&0&0&1\\
    -\tfrac{k}{m_1}&\tfrac{k}{m_1}&0&0\\
    \tfrac{k}{m_2}&-\tfrac{k}{m_2}&0&0
  \end{bmatrix},\quad
  B=\begin{bmatrix}0\\0\\0\\ \tfrac{1}{m_2}\end{bmatrix},\quad
  c=e_1,
\end{equation}
with $m_1=m_2=\SI{1}{\kilo\gram}$,
$k=\SI{0.1}{\newton\per\metre}$, $d=\SI{0.01}{\metre}$, and $G=1$.
The barrier is $h_0(x)=p_1+d$, so Assumption~\ref{as:origin} holds.
Since $c^\top A^iB=0$ for $i\leq2$ and
$\Lambda=k/(m_1m_2)=\num{0.1}$, the constraint has relative degree
$r=n=4$. The class-$\mathcal{K}$ functions share a slope $\alpha$, so
$\varphi(s)=(s+\alpha)^4$, matching
$K_\alpha=[\alpha^4,4\alpha^3,6\alpha^2,4\alpha]$ in
\cite{marchese2025hocbf}. Finally, $K_{\mathrm{LQR}}$ is the linear
quadratic regulator gain for $Q=\diag(1,0,0,0)$ and $R=1$
\cite{marchese2025hocbf}, with
\begin{equation}\label{eq:ex-Klqr}
    K_{\mathrm{LQR}}=\begin{bmatrix}\num{0.2418}&\num{0.7582}&\num{3.1463}&\num{1.2314}\end{bmatrix} .
\end{equation}

\subsection{The Two Modes}

The plant has a rigid-body mode and an undamped oscillation, so
$p_A(s)=s^4+k(m_1^{-1}+m_2^{-1})s^2=s^4+\num{0.2}s^2$. By Proposition~\ref{prop:Atilde}, $\spec(\Ac)=\{-\alpha\}$ for every $K$ and every $\alpha>0$, which is the analytic counterpart of the observation in \cite{marchese2025hocbf} that the mode with the filter active is stable in its own right; the gain it applies is $K^\star$, equal at $\alpha=4$ to
\begin{equation}\label{eq:ex-Kstar}
  K^\star=\begin{bmatrix}\num{2464.2}&\num{95.8}&\num{2544}&\num{16}\end{bmatrix} .
\end{equation}
Since $\Ac$ has no positive real eigenvalue, the origin is the only equilibrium of the closed loop for every $\alpha$ by \cite[Prop.~2]{mestres2026dynamical}. Under \eqref{eq:ex-Klqr},
\begin{equation}\label{eq:ex-chi0}
  p_{A_0}(s)=s^4+\num{1.2314}s^3+\num{0.9582}s^2+\num{0.4378}s+\num{0.1},
\end{equation}
with roots $-\num{0.4399}\pm\num{0.2418}j$ and $-\num{0.1758}\pm\num{0.6049}j$. Both modes are thus Hurwitz for every $\alpha>0$, and by Theorem~\ref{thm:main} what remains is whether $p_{A_0}/\varphi$ is SSPR.

\subsection{The Positive-Real Test}

Condition~\eqref{eq:sspr-cond} here is the scalar test
\begin{equation}\label{eq:ex-test}
  \Real\!\left\{\frac{p_{A_0}(j\omega)}{(\alpha+j\omega)^4}\right\}>0
  \qquad\text{for all }\omega\in\R
\end{equation}
in the single design parameter $\alpha$. Evaluating it on a frequency grid and bisecting on $\alpha$ gives a critical value
\begin{equation}\label{eq:ex-alphastar}
  \alpha^\star=\num{0.9138},
\end{equation}
below which \eqref{eq:ex-test} holds, with the violation first appearing at
$\omega=\num{0.960}$. Fig.~\ref{fig:nyquist} shows the locus of $h(j\omega)$ for several $\alpha$ and the margin $\min_\omega\Real\{h(j\omega)\}$ against $\alpha$. The value \eqref{eq:ex-alphastar} agrees with the numerical finding in \cite[Sec.~IV]{marchese2025hocbf} that an SCQLF exists for $0<\alpha<1$. Computing the eigenvalues of $A_0\Ac$ independently, a negative real eigenvalue appears at the same threshold, as Lemma~\ref{lem:king} requires.

\begin{figure}[t]
  \centering
  \includegraphics[width=\columnwidth]{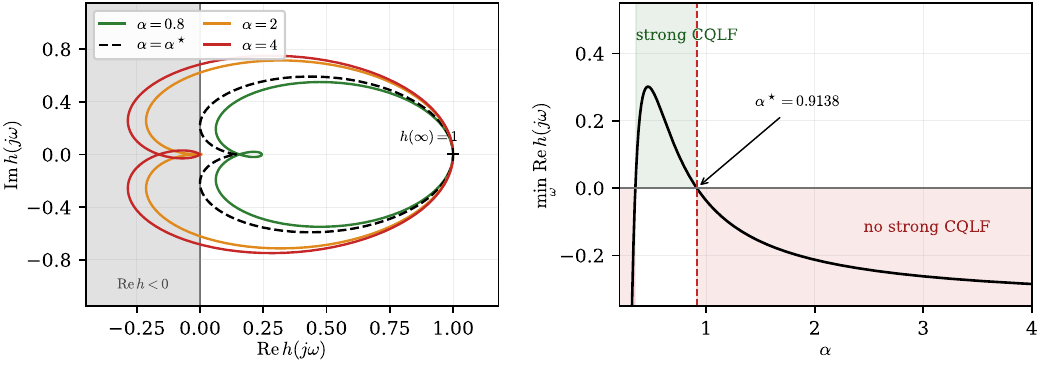}
  \makebox[\columnwidth]{\makebox[.5\columnwidth]{(a)}\makebox[.5\columnwidth]{(b)}}
  \caption{(a) Locus of $h(j\omega)=p_{A_0}(j\omega)/\varphi(j\omega)$ under $K_{\mathrm{LQR}}$; SSPR fails when the locus enters the shaded half-plane. (b) The margin $\min_\omega\Real\{h(j\omega)\}$ against $\alpha$, crossing zero at $\alpha^\star=\num{0.9138}$.}
  \label{fig:nyquist}
\end{figure}

\subsection{What the Threshold Predicts}

Simulating \eqref{eq:closedloop} from $x_0=[5,5,0,0]^\top$ gives the three regimes in Fig.~\ref{fig:trajectories}. At $\alpha=\num{0.8}$ the test holds, Theorem~\ref{thm:ges} applies, and $p_1$ converges after a single activation of the filter. At $\alpha=2$ the test fails, so no SCQLF exists, yet the trajectory still converges with persistent switching and a slowly decaying oscillation. At $\alpha=4$ it diverges, and the divergence threshold we observe near $\alpha\approx\num{3.25}$ matches \cite[Sec.~IV]{marchese2025hocbf}.

The interval between $\alpha^\star$ and $\alpha\approx\num{3.25}$ is the gap described in Remark~\ref{rem:converse}: loss of the quadratic certificate is necessary for instability here but not sufficient, because the switching is state-dependent and confined to $R_\pm$.

\subsection{Redesigning the Nominal Gain}

Suppose $\alpha=4$ is imposed by the barrier specification. Computing \eqref{eq:PQ} and \eqref{eq:Rbar} gives $\bar R=0$ and $\underline R=-\num{1.3443}$, so Corollary~\ref{cor:interval} yields
\begin{equation}\label{eq:ex-interval}
  \lambda\in\left(-\infty,\ \num{0.7439}\right).
\end{equation}
Taking $\lambda=\num{0.5}$,
\begin{equation}\label{eq:ex-K}
  K(\num{0.5})=\tfrac12K^\star
  =\begin{bmatrix}\num{1232.1}&\num{47.9}&\num{1272}&\num{8}\end{bmatrix},
\end{equation}
for which $\min_\omega\Real\{h_\lambda(j\omega)\}=\num{0.328}$ and the origin is globally exponentially stable; Fig.~\ref{fig:trajectories}(b) compares the two gains at this $\alpha$.

\begin{figure}[t]
  \centering
  \includegraphics[width=\columnwidth]{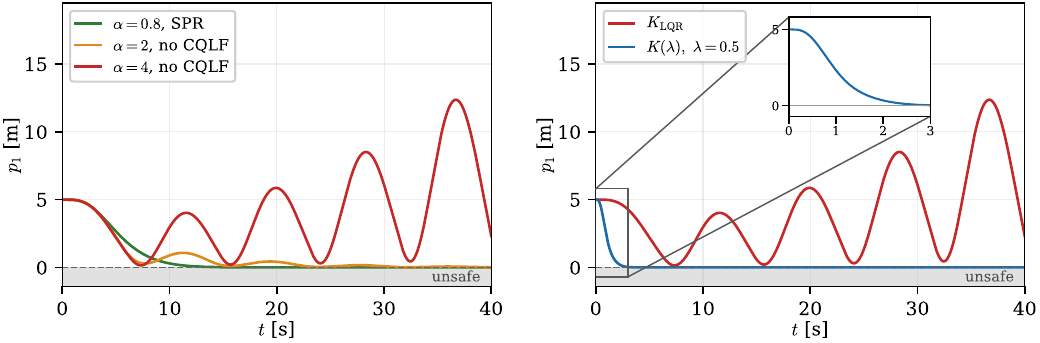}
  \makebox[\columnwidth]{\makebox[.5\columnwidth]{(a)}\makebox[.5\columnwidth]{(b)}}
  \caption{Closed-loop position $p_1$ from $x_0=[5,5,0,0]^\top$ with $d=\num{0.01}$. (a) The three regimes under $K_{\mathrm{LQR}}$. (b) At $\alpha=4$, the trajectories under $K_{\mathrm{LQR}}$ and \eqref{eq:ex-K}. All trajectories remain in the safe set, shown unshaded.}
  \label{fig:trajectories}
\end{figure}

\section{Conclusions}\label{sec:conclusion}

This paper presented a design framework for safe and globally
exponentially stabilising controllers for linear systems with a single
full-relative-degree affine constraint. We showed that the
filtered-mode spectrum is prescribed entirely by the barrier tuning
and derived an explicit nominal controller that satisfies the barrier
condition everywhere, thereby eliminating switching. For a prescribed
nominal controller, we proved that the nominal and filtered modes admit
an SCQLF if and only if the ratio of the nominal characteristic
polynomial to the barrier polynomial is SSPR. This equivalence reduces
certification to a scalar frequency-domain test and yields an explicit
interval of admissible gains.

Building on the conference results, the extended analysis established
a passivity interpretation of the positive-real condition. The
quadratic storage function obtained from the KYP factorisation is,
without modification, an SCQLF for the two modes. We also showed that
full relative degree forces the input matrix to have rank one and
reduced the existing LMI synthesis to a feasibility condition in a
single matrix variable. These results clarify both the structure of
the safety-filtered dynamics and the relationship between the
frequency-domain and LMI design approaches.

The flexible two-mass example showed that loss of the quadratic
certificate can precede instability and that the proposed redesign
restores a safe and globally exponentially stable closed loop. Thus,
the positive-real condition is exact for common quadratic
certificates, although its failure does not imply instability under
state-dependent switching. Future work will address this gap,
intermediate relative degrees, multiple affine constraints, and
nonlinear systems.

\section*{Acknowledgment}
The authors would like to thank Pol Mestres for his insightful
discussions on this topic.

\bibliography{ref}
\bibliographystyle{IEEEtran}

\end{document}